\documentclass[11pt,a4paper,english]{article}

\usepackage{array}
\usepackage{amsmath}
\usepackage{amscd}
\usepackage{latexsym}
\usepackage[english]{babel}
\usepackage{amsfonts}
\usepackage{version}
\usepackage{graphicx}
\usepackage{changepage}
\usepackage[T1]{fontenc}
\usepackage[utf8]{inputenc}
\usepackage{amsthm}
\usepackage{mathtools}
\usepackage{amssymb}
\usepackage{version}
\usepackage{fancyhdr}
\usepackage{caption}
\usepackage{stmaryrd}
\usepackage{url}
\usepackage{hyperref}
\usepackage{graphicx}
\usepackage{subfigure}
\usepackage{color,soul}

\addtolength{\hoffset}{-0.5cm}
\addtolength{\textwidth}{1cm}

\newcommand{\esp}[2][\mathbb E] {#1\left[#2\right]}

\numberwithin{equation}{section}

\newcommand{\condespg}[2][\G_t]       { E\left.\left[#2\right|#1\right]}

\newcommand{\ud}{\mathrm d}

\newcommand{\R}{\R}

\newtheorem{theorem}{Theorem}[section]
\newtheorem{lemma}[theorem]{Lemma}
\newtheorem{defn}[theorem]{Definition}
\newtheorem{assump}[theorem]{Assumption}
\newtheorem{prop}[theorem]{Proposition}
\newtheorem{cor}[theorem]{Corollary}
\newtheorem{rem}[theorem]{Remark}

\def \H {{\mathcal H}}
\def \Hb {{\mathbb H}}
\def \G {{\mathcal G}}
\def \Gb {{\mathbb G}}
\def \F {{\mathcal F}}
\def \Fb {{\mathbb F}}
\def \P {{\mathbb P}}

\def \R {{\mathbb R}}
\def \N {{\mathbb N}}

\newcommand{\Ind}[1]{\mathbf{1}_{\left\{ #1 \right\}}}

\def \P {\mathcal{P}}

\newcommand{\mail}[1]{\href{mailto:#1}{\texttt{#1}}}

\title{Extended Reduced-Form Framework for Non-Life Insurance}
\author{Francesca Biagini\footnote{Main affiliation: Department of Mathematics, LMU Munich, Theresienstra{\ss}e, 39, 80333 Munich, Germany, Email: \mail{biagini@math.lmu.de}} \footnote{Secondary affiliation: Department of Mathematics, University of Oslo, Box 1053, Blindern, 0316, Oslo, Norway.}
\and Yinglin Zhang\footnote{Department of Mathematics, LMU Munich, Theresienstra{\ss}e, 39, 80333 Munich, Germany, Email: \mail{zhang@math.lmu.de}}}


\begin{document}
\maketitle 

\begin{abstract}
	In this paper we propose a general framework for modeling an insurance liability cash flow in continuous time, by generalizing the reduced-form framework	for credit risk and life insurance. In particular, we assume a nontrivial
	dependence structure between the reference filtration and the insurance internal filtration. We apply these results for pricing and hedging non-life insurance liabilities in hybrid financial and insurance markets, while taking into account the role of inflation under the benchmarked risk-minimization approach. This framework offers at the same time a general and flexible structure, and explicit and treatable pricing-hedging formula.
	\\ \ \\
	\textbf{JEL Classification:} C02, G10, G19\\ \ \\
	\textbf{Key words:} non-life insurance, reduced-form framework, marked point process, benchmark approach, filtration dependence, market-consistent valuation, risk mitigation.
\end{abstract}

\section{Introduction}

In this paper we propose a general framework for modeling an insurance liability cash flow in continuous time, by extending the classic reduced-form setting for credit risk and life insurance. In particular, we consider a nontrivial dependence structure between the reference filtration $\Fb$  and insurance internal filtration $\Hb$.
The global information flow available to the insurance company is represented by $\Gb = \Fb \vee \Hb$.
In this way, we obtain for the first time a framework in continuous time for non-life insurance, where filtration dependence is taken into account.
In view of the development of insurance-linked derivatives, which offer the possibility of transferring insurance risks to the financial market, this bottom-up modeling approach can be used for pricing and hedging both life and non-life insurance liabilities in hybrid financial and insurance markets.
As an application of the general framework structure, we derive pricing and hedging formulas for non-life insurance claims by taking into account the role of inflation under the benchmarked risk minimization approach, as introduced in \cite{Pla3}, \cite{Bia-Cre} and \cite{Bia-Zha}.

Historically, the mathematical modeling of life and non-life insurance liabilities in continuous time is quite asymmetric and risk mitigation of non-life portfolios via asset allocation is scarcely practiced. While there are a lot of recent works concerning life insurance, see e.g \cite{Mol-u}, \cite{Cai-Bla}, \cite{Dah-Mol}, \cite{Bar}, \cite{Bia-Rhe-h}, \cite{Bia-Sch}, \cite{Bia-Rhe-r}, \cite{Bia-Zha}, non-life insurance is often studied in discrete time and/or state space, see e.g. \cite{Kal-Nor}, \cite{Lar}, \cite{Hap-Mer}. We refer to e.g. \cite{Wut} for a unified framework for life and non-life insurance in discrete time.
Mathematical frameworks for non-life insurance in continuous time can be found in e.g. \cite{Mol-i}, \cite{Del}, \cite{Bar-Lau}, \cite{Nor-Sav}, \cite{Nor-qua} and \cite{Schm}. However, these settings do not consider a nontrivial dependence structure between reference filtration and insurance internal filtration. In particular, in e.g. \cite{Nor-Sav} and \cite{Nor-qua}, the insurance internal filtration is not distinguished from the reference filtration, and in e.g. \cite{Mol-i}, \cite{Del} and \cite{Bar-Lau}, reference and insurance internal filtrations are assumed to be independent. 
The importance of considering a nontrivial dependence structure between filtrations, which represent different information flows in a hybrid market, is discussed in \cite{Bia} in view of the recent introduction of insurance-linked derivatives.
Derivatives based on occurrence intensity index, such as mortality derivatives, weather derivatives etc., can play an important role in mitigating risks of insurance companies in the case of life and non-life insurance business. In particular, non-catastrophe non-life insurance (see e.g. \cite{Bri} for the distinction between catastrophe and non-catastrophe insurance.), which includes car insurance, theft insurance, home insurance, etc., as opposed to catastrophe non-life insurance, covers high-probability low-cost events, and is often neglected by the literature. 
This paper aims to fill this gap, i.e. to identify the role of  non-life related hybrid products, to contribute to their design and to provide analytical results which can be used for the non-life insurance reserving problem and the valuation and hedging of non-life insurance portfolio by trading hybrid non-life products, which are currently still not common but can be potentially attractive in the future.

Recent non-life insurance literature in continuous time, see e.g. \cite{Bar-Lau}, \cite{Nor-Sav}, \cite{Nor-qua} and \cite{Schm}, commonly assumes the insurance internal information flow as given by the natural filtration of a marked point process, which describes the insurance claim movement.
Pricing and hedging formulas are then obtained by using the compensator of this marked point process.
However, as we discuss in Section \ref{sec: relation with compensator}, this approach can not be always followed in the case of multiple filtrations with nontrivial dependence. Indeed, with respect to a generic filtration, it is not always true that there exists a marked point process with a given compensator, and the compensator does not always determine uniquely the law of the process. 
To overcome these difficulties, we propose a new framework, which uses a direct approach as in Section 5.1 and 9.1.2 of \cite{Bie-Rut} and allows an explicit bottom-up construction to treat more general filtrations. 
We note that, when our general framework is reduced to the case of life insurance, the compensator approach and the direct modeling approach coincide, see the discussion in e.g. \cite{Bie-Rut} for the classic reduced-form framework.

More precisely, in our new framework we consider a homogeneous insurance portfolio with $n$ claims. We assume that the reference filtration $\Fb$ includes information related to the financial market and to environmental, social and economic indicators.
Following the classic non-life insurance modeling approach as in e.g. \cite{Ary} and \cite{Mik}, we assume  the insurance internal filtration $\Hb$, which represents internal information of an insurance company given by the claim movements, to be generated by a family of marked point processes, describing sequences of reporting times and associated losses.
As typically in the case of non-life insurance, accident times and their related damages are unknown until the moment of reporting.
We are able to capture these features and at the same time to introduce a dependence structure between filtrations $\Fb$ and $\Hb$ by providing a nontrivial extension of the classical reduced-form framework. In particular, we model accident times as $\Fb$-conditionally independent random variables with a common $\Fb$-adapted intensity process $\mu$. We note that in this way, we assume that intensity of accident times $\mu$ may be influenced by external factors and economic indicators. Random delay between accident time and the first reporting is modelled in the first mark and subsequent development of the claim is modelled by a time shift of an independent marked point process with respect to the first reporting.
This structure includes the life insurance case and allows to obtain analytical valuation formulas, which can be expressed in term of the accident intensity $\mu$, the delay distribution and the updating distribution, as illustrated in the preliminary calculations in Section \ref{sec: preliminary results}.
We then apply these results for pricing and hedging insurance liabilities in a hybrid market under the benchmarked risk-minimization approach of \cite{Pla3}, \cite{Bia-Cre} and \cite{Bia-Zha}. The hybrid nature of the combined market is given by the (hypothetical) presence of derivatives related to the intensity process $\mu$ on the financial market and by the influence of inflation and benchmark portfolio in the valuation and hedging of insurance liabilities. This gives insight to the role of non-life insurance linked financial products, which might be attractive for non-life insurance companies to mitigate and transfer non-life underwriting risks, in a similar manner to the role of mortality or longevity derivatives for life insurance companies.

In summary, the main contributions of this paper is to propose a general framework for modeling an insurance liability cash flow  in continuous time taking into account filtration dependence, which includes both  the non-life and the life insurance case. This approach allows to obtain liability cash flow valuation formulas for non-life insurance portfolios in a general setting, as well as  for the pricing of (possible) hybrid financial products written as a bet on non-life insurance events or portfolios. Whereas these hybrid products are uncommon in the present market,  our framework could help to identify their role and contribute to their design and valuation.

This paper is organised as follows. In Section \ref{sec: general framework} we construct a general framework for insurance liability cash flow in continuous time under a nontrivial dependence between the reference and the insurance internal filtrations, applicable both to life and non-life insurance, and give a brief comparison with the existing insurance frameworks in the literature. In Section \ref{sec: preliminary results} we give some useful preliminary valuation results in this setting. In Section \ref{sec: relation with compensator} we discuss the compensator approach.
In Section \ref{sec: hybrid market} we describe the hybrid nature of the combined market and derive the real world pricing formula and benchmarked risk-minimizing strategy for non-life insurance claims. 

\section{General framework}\label{sec: general framework}

In this section we construct a general framework for modeling an insurance liability cash flow. We consider a filtered probability space $(\Omega, \G, \Gb, P)$, where $\Gb := (\G_t)_{t \geqslant 0}$, $\G = \G_\infty$, and $\G_0$ is trivial. 

We assume that $\Gb =\Fb \vee \Hb$, where $\Fb := (\F_t)_{t \geqslant 0}$ and $\Hb := (\H_t)_{t \geqslant 0}$ are filtrations representing respectively a reference information flow and the internal information flow only available to the insurance company. Hence $\Gb$ describes the global information flow available to the insurance company. The reference filtration $\Fb$ typically includes information related to the financial market, and to
environmental, political and social indicators. While we do not specify the structure of the reference filtration $\Fb$, we assume that the insurance internal filtration $\Hb$ is generated by a family of marked point processes, representing the times and amounts of losses of the insurance portfolio, as in e.g. \cite{Ary}, \cite{Jew}, \cite{Nor-pre} and \cite{Nor-preII}.
Filtrations $\Fb$ and $\Hb$ are not supposed be independent. Without loss of generality, we assume that all filtrations satisfy the conditions of completeness and right-continuity. If not otherwise specified, all relations in this paper hold in the $P$-a.s. sense.
For a detailed background of marked point processes we refer to e.g. \cite{Las}, \cite{Dal} and \cite{Jac}. In the following we use the classic terminology of non-life insurance, see e.g. \cite{Wut-Mer-s} and \cite{Par}, and specify the filtration $\Hb$ as follows.

We consider an insurance portfolio with $n$ policies. 
For $i$-th policy with $i = 1,...,n$, the insurance company is typically informed about the accident occurred at a random time $\tau^i_0$ only after a random delay $\theta^i$, which can be very long especially in the case of non-life insurance.
Once the accident is reported at time $\tau^i_1$, where 
\begin{equation}\label{eq: relation tau1 tau0}
	\tau^i_1 := \tau^i_0  + \theta^i,
\end{equation}
both the accident time $\tau^i_0$, the reporting delay $\theta^i$ and the impact size of the accident, described by a nonnegative random variable $X^i_1$, become available information. In particular, we assume that for all $i=1,...,n$, $\tau^i_0 > 0$ $P$-a.s.

Let $\N_0$ be the set of natural numbers without zero.
We describe the $i$-th insurance policy movement by a marked point process $(\tau^i_j, \Theta^i_j)_{j \in \N_0}$
with 2-dimensional nonnegative marks. That is, the sequence $({\tau}^i_j)_{j \in \N_0}$ is a point process, where 
\[
	{\tau}^i_j: (\Omega, \G, P) \rightarrow (\overline{\R}_+, \mathcal{B}(\overline{\R}_+)), \ \ \ \ j \in \N_0,
\]
and $({\Theta}^i_j)_{j \in \N_0}$ is a sequence of 2-dimensional nonnegative random variables, with
\[
	{\Theta}^i_j: (\Omega, \G, P) \rightarrow ({\R}^2_+, \mathcal{B}({\R}^2_+)), \ \ \ \ j \in \N_0.
\]
For every $j \in \N_0$, the random time $\tau^i_j$ describes the reporting time of $j$-th event related to $i$-th policy. The mark components $\Theta^i_j$ describe the reporting delay and the impact size of the corresponding event, respectively, which are known only if the event is reported. 
More precisely, we set
\begin{equation}\label{eq: specif MP 1}
	\tau^i_1 \ \ \ \ \text{with mark} \ \ \ \  \Theta^i_1 = (\theta^i, X^i_1),
\end{equation}
and
\begin{equation}\label{eq: specif MP 2}
	\tau^i_{j + 1} = \tau^i_1 + \tilde{\tau}^i_j \ \ \ \ \text{with mark} \ \ \ \ \Theta^i_{j+1} = (0, {X}^i_{j + 1}) := (0, \tilde{X}^i_j),
\end{equation}
for $j \geqslant 1$, where $(\tilde{\tau}^i_j, \tilde{X}^i_{j})_{j \in \N_0}$ is an auxiliary marked point process, which describes updating and development after the first reporting at ${\tau}^i_1$. 
For the sake of simplicity, we here assume that only the first reporting delay is different from zero, since in this paper we focus on modeling the first accident times $\tau^i_0$ and their relation with the reference filtration. However  our setting can be easily generalized by considering non zero random delays in (\ref{eq: specif MP 2}). 
We assume that the marked point process $(\tilde{\tau}^i_j, \tilde{X}^i_{j})_{j \in \N_0}$ is simple, i.e.
\[
	\lim_{j \rightarrow \infty}{\tilde{\tau}}^i_j = \infty,
\]
and ${\tilde{\tau}}^i_j < {\tilde{\tau}}^i_{j+1}$, if ${\tilde{\tau}}^i_j < \infty$, and satisfies the following integrability condition
\begin{equation}\label{eq: finite moment}
	E\left[\sum_{j=1}^\infty \Ind{\tilde{\tau}^i_j \leqslant t} \tilde{X}^i_{j} \right] < \infty \ \ \ \ \text{for all } t \geqslant 0,
\end{equation}
for $i=1,...,n$.
In particular, the random times $(\tau^i_j)_{j \in \N_0}$ are strictly ordered in case of finite value: 
\begin{equation}\label{eq: simple}
	\begin{array}{l}
	\tau_1^i < \tau_2^i < \cdots < \tau_j^i < \tau_{j+1}^i < \cdots, \ \ \ i=1,...,n.
	\end{array}
\end{equation}
Note that we may have $\infty = \tau^i_j = \tau^i_{j+1} = ...$, in such a case infinite value stands for an event which never happens. 
For the sake of simplicity we assume also the following.
\begin{assump}\label{ass: homo and indep} \ 
	\begin{enumerate}
		\item \emph{Homogeneous delay}: the random delays $\theta^i$, $i = 1,...,n$, have the same distribution. 
		
		\item \emph{Homogeneous development}: the marked point processes $(\tilde{\tau}^i_j, \tilde{X}^i_{j})_{j \in \N_0}$, $i = 1,...,n$, have the same distribution.
		
		\item \emph{Independent first mark}: the first marks $X^i_1$, $i = 1,...,n$, are mutually independent and independent of $\F_\infty \vee \sigma(\tau^1_0) \vee ... \vee \sigma(\tau^n_0)$.
		
		\item \emph{Independent delay}: the random delays $\theta^i$, $i=1,...,n$, are mutually independent and independent of $\F_\infty \vee \sigma((\tau^1_0, X^1_1)) \vee ... \vee \sigma((\tau^n_0, X^n_1))$. 
		
		\item \emph{Independent development}: the marked point processes $(\tilde{\tau}^i_j, \tilde{X}^i_{j})_{j \in \N_0}$, $i=1,..,n$ are mutually independent and independent of $\F_\infty \vee \sigma((\tau^1_1, \theta^1, X^1_1)) \vee ... \vee \sigma((\tau^n_1, \theta^n, X^n_1))$. 
	\end{enumerate}
\end{assump}

\noindent We emphasize that the above assumptions cover sufficient generality. The homogeneity assumptions can be satisfied by subdividing appropriately the insurance portfolio in homogeneous groups of claims. The independence assumptions reflect the fact that reporting delays $\theta^i$, occurrences and size of the losses after the first reporting time, described by $(\tilde{\tau}^i_j, \tilde{X}^i_j)_{j \in \N_0}$, are typically idiosyncratic factors which are independent of each other and independent of the reference information. However, we introduce a dependence structure by modeling $\Fb$-progressively measurable occurrence intensities of the accidents, as we will present in (\ref{eq: def mu}) and (\ref{eq: def gamma}). This will reflect the assumption that the occurrence intensity of accidents can be deduced from the reference information flow represented by $\Fb$, while further updates of accident events $(\tilde{\tau}^i_j, \tilde{X}^i_j)_{j \in \N_0}$ are typically insurance portfolio specific and are not available as third-party or reference information.
We assume furthermore that the distribution of delay variables $\theta^i$, $i=1,...,n$ has the following structure.

\begin{assump}\label{ass: delay}
	The common cumulative distribution function $G: [0,+\infty) \rightarrow [0,1]$ of $\theta^i$, $i=1,...,n$, assigns probability $\alpha_0$ at 0 and has a density function $g$ for $x>0$, i.e,
	\begin{equation}\label{eq: delay structure}
		G(x) = \alpha_0 + \int_0^{x} g(y) \ud y, \ \ \ \ x \in \R_+.
	\end{equation}
\end{assump}

\noindent According to the above assumption, the delays are nonnegative and may have a mixed distribution. In this way, we cover both the case of life insurance with $\theta^i \equiv 0$, i.e. $g = 0$, and the case of non-life insurance with non-null delays. 

For every $i= 1, ..., n$, we define the marked cumulative process $N^i$ by
\[
	N^i(t,B) (\omega) :=  \sum_{j=1}^\infty \Ind{\tau^i_j(\omega) \leqslant t} \Ind{\Theta^i_j(\omega) \in B}, \
\]
for every $t \geqslant 0$, $B \in \mathcal{B}(\R^2_+)$, $\omega \in \Omega$.
The process $(\mathbf{N}^i_t)_{t \geqslant 0}$ defined by
\[
	{\mathbf{N}^i_t}:= N^i(t,\R^2_+) = \sum_{j=1}^\infty \Ind{\tau^i_j \leqslant t}, \ \ \ \ t \geqslant 0,
\]
is called ground process associated to the marked point process.
At any time $t \geqslant 0$, the random variable ${\mathbf{N}^i_t}$ counts the number of occurrence of $\tau^i_j$ up to time $t$.
In the literature, the name marked point process refers sometimes to the process $N^i$. 
Indeed, there is a unique correspondence between the marked point process $(\tau^i_j, \Theta^i_j)_{j \in \N_0}$ and its marked cumulative process $N^i$. More precisely, $\tau^i_j$ can be uniquely defined by $N^i$ as
\begin{equation}\label{eq: rel mpp and mcp 1}
	\{ \tau^i_j \leqslant t \} = \{ {\mathbf{N}^i_t} \geqslant j \},
\end{equation}
for all $t \geqslant 0$, and $\Theta^i_j$ can be uniquely defined by $N^i$ as
\begin{equation}\label{eq: rel mpp and mcp 2}
	\{\Theta^i_j \in B \} = \bigcup_{K'=1}^\infty \bigcap_{K=K'}^\infty \bigcup_{k=1}^\infty \{{\mathbf{N}^i_{(k-1)/2^K}} = n-1, N^i(k/2^K, B) - N^i((k-1)/2^K, B) = 1\},
\end{equation}
for all $B \in \mathcal{B}(\R^2_+)$. See equations (2.8), (2.9) of \cite{Jac} and Lemma 2.2.2 of \cite{Las}.

We consider the filtrations ${\Hb}^{i,1} := ({\H}^{i,1}_t)_{t \geqslant 0}$ with 
\[
	{\H}^{i,1}_t := \sigma \left( \Ind{\tau^i_1 \leqslant s} \Ind{(\theta^i, X^i_1) \in B}, 0 \leqslant s \leqslant t, \text{ for all } B \in \mathcal{B}(\R^2_+) \right),
\]
for all $t \geqslant 0$, and ${\Hb}^{i,j} := ({\H}^{i,j}_t)_{t \geqslant 0}$, $j >1$, with 
\[
	{\H}^{i,j}_t := \sigma \left( \Ind{\tau^i_j \leqslant s} \Ind{X^i_j \in B}, 0 \leqslant s \leqslant t, \text{ for all } B \in \mathcal{B}(\R_+) \right),
\]
for all $t \geqslant 0$. It holds that
\[
	{\H}^{i,j}_\infty = \sigma(\tau^i_j) \vee \sigma( X^i_j) \ \ \ \ \text{for } j > 1.
\]
In particular, in view of (\ref{eq: relation tau1 tau0}) we have
\begin{equation} \label{eq: H 1 infty}
	{\H}^{i,1}_\infty = \sigma(\tau^i_1) \vee \sigma((\theta^i, X^i_1)) = \sigma(\tau^i_0) \vee \sigma((\theta^i, X^i_1)).
\end{equation}
Let $\Hb^i := (\H^i_t)_{t \geqslant 0}$ be the natural filtration of the marked cumulative process $N^i$, that is for all $t \geqslant 0$,
\begin{equation*}
	\H^i_t = \sigma (N^i(s, B), 0 \leqslant s \leqslant t, \text{ for all } B \in \mathcal{B}(\R^2_+)).
\end{equation*}
The internal information flow of the insurance company is described by the filtration $\Hb := (\H_t)_{t \geqslant 0}$, where 
\begin{equation}\label{eq: def H}
	\H_t := \H^1_t \vee ... \vee \H^n_t, \ \ \ \ t \geqslant 0.
\end{equation}

Similarly, for $i=1,..,n$, we call $\tilde{N}^i$ the corresponding marked cumulative processes associated to the marked point processes $(\tilde{\tau}^i_j, \tilde{X}^i_{j})_{j \in \N_0}$ and $\tilde{\Hb}^i$ the corresponding filtration, respectively. Similarly, all other notations associated to these last processes will be denoted by the symbol "$\sim$".

\begin{lemma}\label{lemma: representation Hi}
	For every $i = 1,...,n$, we have $\Hb^i = \bigvee_{j \in \N_0} \Hb^{i,j}$.
\end{lemma}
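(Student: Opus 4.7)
The plan is to prove the two set-theoretic inclusions separately at each time $t$, i.e.\ to show $\H^i_t = \bigvee_{j \in \N_+} \H^{i,j}_t$ for every $t \geqslant 0$, which then gives the equality of filtrations.

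For the inclusion $\bigvee_{j \in \N_+} \H^{i,j}_t \subseteq \H^i_t$, I would take a typical generator of $\H^{i,j}_t$, namely $\Ind{\tau^i_j \leqslant s}\Ind{\Theta^i_j \in B}$ for $s \leqslant t$ and $B \in \mathcal B(\R^2_+)$, and show it is $\H^i_t$-measurable. The factor $\Ind{\tau^i_j \leqslant s}$ is $\H^i_t$-measurable thanks to \eqref{eq: rel mpp and mcp 1}, since $\{\tau^i_j \leqslant s\} = \{\mathbf N^i_s \geqslant j\}$ and $\mathbf N^i_s = N^i(s,\R^2_+)$ is built from $N^i$-values at time $s \leqslant t$. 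The main point is to identify the mark $\Theta^i_j$ on the event $\{\tau^i_j \leqslant s\}$ using only $N^i$ restricted to $[0,t]$: on this event $\tau^i_j \in ((k-1)/2^K, k/2^K]$ for some $k$ with $k/2^K \leqslant s + 2^{-K} \leqslant t + 2^{-K}$, and the rewriting
\[
\{\tau^i_j \leqslant s, \Theta^i_j \in B\} = \bigcup_{K'=1}^{\infty}\bigcap_{K=K'}^{\infty}\bigcup_{k : k/2^K \leqslant s}\bigl\{\mathbf N^i_{(k-1)/2^K}=j-1,\ N^i(k/2^K,B)-N^i((k-1)/2^K,B)=1\bigr\}
\]
(a localized variant of \eqref{eq: rel mpp and mcp 2}) exhibits this event as an element of $\H^i_t$ since every $N^i$ value used is evaluated at a time no later than $s \leqslant t$.

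For the reverse inclusion $\H^i_t \subseteq \bigvee_{j \in \N_+} \H^{i,j}_t$, I would use the definition of $\H^i_t$ as generated by the random variables $N^i(s,B)$ for $0 \leqslant s \leqslant t$ and $B \in \mathcal B(\R^2_+)$. By construction,
\[
N^i(s,B) = \sum_{j=1}^{\infty} \Ind{\tau^i_j \leqslant s}\Ind{\Theta^i_j \in B},
\]
and each summand is $\H^{i,j}_s$-measurable by definition of $\H^{i,j}$, hence $\H^{i,j}_t$-measurable since $s \leqslant t$. Consequently $N^i(s,B)$ is measurable with respect to $\bigvee_{j \in \N_+} \H^{i,j}_t$, and the inclusion follows because the generators of $\H^i_t$ all sit in this $\sigma$-algebra.

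The main subtlety, and therefore the step requiring the most care, is the localization in the first inclusion: writing the mark-identification formula \eqref{eq: rel mpp and mcp 2} in a way that only involves $N^i$-values at times no greater than $t$, rather than using $\H^i_\infty$ as in the global version of the formula. Once this dyadic localization is written out carefully, both inclusions are essentially direct from the defining generators, and taking the union of the $\sigma$-algebras over $t \geqslant 0$ yields the claimed equality of filtrations.
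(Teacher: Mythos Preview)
Your proposal is correct and follows essentially the same approach as the paper: both argue the two inclusions at the level of generators, with the paper declaring $\H^i_t \subseteq \bigvee_j \H^{i,j}_t$ ``clear'' and invoking \eqref{eq: rel mpp and mcp 1}--\eqref{eq: rel mpp and mcp 2} for the reverse, while you spell out both directions and, in particular, make explicit the localization of \eqref{eq: rel mpp and mcp 2} to times $\leqslant t$ that the paper leaves implicit. One small slip: your displayed localized formula restricts to $k/2^K \leqslant s$, whereas (as you yourself note just above) the dyadic endpoint satisfies only $k/2^K \leqslant s + 2^{-K}$; either use the condition $(k-1)/2^K < s$, or observe that the resulting event lies in $\H^i_{t+}=\H^i_t$ by right-continuity.
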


\begin{proof}
	Clearly, we have
	\[
		\H^i_t \subseteq \bigvee_{j \in \N_0} \H^{i,j}_t.
	\]
	For the other inclusion, it is sufficient to show that for all $0 \leqslant s \leqslant t$ and $B \in \mathcal{B}(\R^2_+)$,
	\[
		\{\tau^i_j \leqslant s\} \cap \{\Theta^i_j \in B\} \in \H^i_t.
	\]
	This follows directly from (\ref{eq: rel mpp and mcp 1}) and (\ref{eq: rel mpp and mcp 2}).
\end{proof}

\noindent We now introduce the following notation, which is useful in the sequel.
For $i = 1,...,n$, $j \in \N_0$, we define 
\[
	\Hb^{i,\leqslant j} := \bigvee_{k \leqslant j} \Hb^{i,k}, \ \ \ \ \Hb^{i,\geqslant j} := \bigvee_{k \geqslant j} \Hb^{i,k},
\]
similarly for $\Hb^{i,>j}$ and $\Hb^{i,< j}$. In particular, in the case of $j = 1$, we set $\H^{i,< 1}_t := \{ \emptyset, \Omega \}$ for every $t \geqslant 0$. The following corollary is a direct consequence of Lemma \ref{lemma: representation Hi}.

\begin{cor}\label{cor: representation Hi}
	For every $i = 1,...,n$, $j \in \N_0$, we have 
	\[
		\Hb^i = \Hb^{i,\leqslant j} \vee \Hb^{i,> j} = \Hb^{i,< j} \vee \Hb^{i,\geqslant j}.
	\]
\end{cor}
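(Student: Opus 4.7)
The plan is to deduce this directly from Lemma \ref{lemma: representation Hi} by using the associativity of the supremum of $\sigma$-algebras (i.e., the fact that $\bigvee$ over a union of index sets equals the $\bigvee$ of the two corresponding joins). This is a purely set-theoretic manipulation, so I do not expect any real technical obstacle.

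First I would recall that by definition $\Hb^{i,\leqslant j} = \bigvee_{k \leqslant j} \Hb^{i,k}$ and $\Hb^{i,>j} = \bigvee_{k > j} \Hb^{i,k}$, and similarly for the other two filtrations. Since for each fixed $t \geqslant 0$ the map $\mathcal{A} \mapsto \bigvee_{k \in \mathcal{A}} \H^{i,k}_t$ is monotone in the index set $\mathcal{A} \subseteq \N_+$ and satisfies
\[
\bigvee_{k \in \mathcal{A}_1 \cup \mathcal{A}_2} \H^{i,k}_t = \Bigl(\bigvee_{k \in \mathcal{A}_1} \H^{i,k}_t \Bigr) \vee \Bigl(\bigvee_{k \in \mathcal{A}_2} \H^{i,k}_t \Bigr),
\]
splitting $\N_+ = \{1,\ldots,j\} \cup \{k \in \N_+ : k > j\}$ yields $\bigvee_{k \in \N_+} \H^{i,k}_t = \H^{i,\leqslant j}_t \vee \H^{i,>j}_t$ for every $t \geqslant 0$, and analogously $\bigvee_{k \in \N_+} \H^{i,k}_t = \H^{i,<j}_t \vee \H^{i,\geqslant j}_t$ by splitting at $\N_+ = \{1,\ldots,j-1\} \cup \{k \in \N_+ : k \geqslant j\}$.

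Then I would invoke Lemma \ref{lemma: representation Hi} to identify the left-hand side $\bigvee_{k \in \N_+} \H^{i,k}_t$ with $\H^i_t$, obtaining both equalities at the level of $\sigma$-algebras indexed by $t$, and hence at the level of filtrations. The only point to double-check is the edge case $j=1$ in the second equality: here the convention $\H^{i,<1}_t := \{\emptyset,\Omega\}$ stated just before the corollary makes $\H^{i,<1}_t \vee \H^{i,\geqslant 1}_t = \H^{i,\geqslant 1}_t = \bigvee_{k \in \N_+} \H^{i,k}_t$, which coincides with $\H^i_t$ by the lemma, so the statement remains valid. This completes the proof sketch.
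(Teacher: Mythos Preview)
Your proposal is correct and matches the paper's approach: the paper simply states that the corollary is a direct consequence of Lemma \ref{lemma: representation Hi}, and your argument spells out precisely the associativity of $\bigvee$ that makes this immediate. The edge-case check for $j=1$ is a nice addition and is consistent with the convention introduced just before the corollary.
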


Similarly to the reduced form setting for credit risk and life insurance, we now model the accident times $\tau^i_0$, $i = 1,...,n$, and their relation with the reference filtration in the following way. 
We assume that random times $(\tau^i_j)_{j \in \N}$, $i = 1,...,n$ are not $\Fb$-stopping times, i.e. for every $i$ and $j$, there is $t$ such that $\{\tau^i_j \leqslant t\} \notin \F_t$. As in Section 9.1.2 of \cite{Bie-Rut}, we set that accident times $\tau^i_0$, $i = 1,...,n$, are such that for $t \in [0,\infty]$ and $s \in [0,t] \cap [0,\infty)$, 
\begin{equation}\label{eq: cond exp tau 0}
	P\left. \left(\tau^i_0 > s \right| \F_t \right) = P\left. \left(\tau^i_0 > s \right| \F_s \right),
\end{equation}
and for  $l, k = 1,..,n$ with $l \neq k$, $\tau^l_0$ and $\tau^k_0$ are $\Fb$-conditionally independent, i.e. if $t \in [0,\infty]$ and $r,s \in [0,t] \cap [0,\infty)$, we have 
\begin{equation}\label{eq: F independent tau 0}
	P\left. \left(\tau^l_0 > r, \tau^k_0 > s \right| \F_t \right) = P\left. \left(\tau^l_0 > r \right| \F_t \right) P \left. \left(\tau^k_0 > s \right| \F_t\right).
\end{equation}

\begin{rem}\label{rem: H0 are F independent}
If we define $\H^{i,0}_t := \sigma \left( \Ind{\tau^i_0 \leqslant s}: 0 \leqslant s \leqslant t \right)$, $i = 1,...,n$, then condition (\ref{eq: cond exp tau 0}) is equivalent to 
\[
	E[ X | \F_t] = E[ X | \F_s],
\]
for each integrable $\H^{i,0}_s$-measurable random variable $X$.
Condition (\ref{eq: F independent tau 0}) is equivalent to the $\F_t$-conditional independence between the $\sigma$-algebras $\H^{l,0}_t$ and $\H^{k,0}_t$.
\end{rem}

\noindent Furthermore, if $F^i := (F^i_t)_{t \geqslant 0}$ is the $\Fb$-conditional cumulative process  of $\tau^i_0$,
\[
	F^i_t := P\left.\left( \tau^i_0 \leqslant t \right|  \mathcal{F}_t\right), \ \ \ \ t \geqslant 0,
\]
we assume that there exists a locally integrable and $\Fb$-progressively measurable process $\mu^i := (\mu^i_t)_{t \geqslant 0}$, such that
\begin{equation}\label{eq: def mu}
	e^{- \int_0^t \mu^i_s \ud s} = 1 - F^i_t \ \ \ \ \text{for all } t \geqslant 0.
\end{equation}
We define $\Gamma^i := (\Gamma^i_t)_{t \geqslant 0}$ as
\begin{equation}\label{eq: def gamma}
	\Gamma^i_t := \int_0^t \mu^i_s \ud s, \ \ \ \  t \geqslant 0.
\end{equation}
The process $\mu^i$ is called \emph{intensity process} of the random jump time $\tau^i_0$ and the process $\Gamma^i$ is called \emph{hazard process} of $\tau^i_0$. 
An explicit construction in Example 9.1.5 of \cite{Bie-Rut} shows that for a given family of locally integrable $\Fb$-progressively measurable process $\mu^i$, $i = 1,...,n$, it is always possible to construct random times $\tau^i_0$, $i = 1,...,n$, such that $\Gamma^i$ is the hazard process of $\tau^i_0$ for every $i = 0,...,n$, and all the assumptions above are satisfied. 
For the sake of simplicity, we assume that the insurance portfolio is homogeneous.
\begin{assump} \label{ass: accident times}
	The accident times $\tau^i_0$, $i= 1,...,n$, have the same intensity process $\mu$.
\end{assump}

\noindent Under this homogeneity condition, we denote the common $\Fb$-conditional cumulative function and hazard process respectively by $F$ and $\Gamma$. The above assumption reflects the fact that, while the policy developments may not have direct link to the information flow $\Fb$, the accident occurrences $\tau^i_0$, $i = 1,...,n$, are influenced by some common 
external systematic risk factors described by occurrence intensity $\mu$ and observable from filtration $\Fb$.

We now show how the general framework described above comprehends in a synthetic way both life and non-life insurance modeling, and compare our setting with the existing literature.

\subsection{Life insurance}

Life insurance policies typically do not have reporting delay and depend only on $\tau^i_0$, $i = 1,...,n$, which actually represent the decease times.
This can be included in our framework by setting $\theta^i \equiv 0$, $\tau^i_j \equiv \infty$ for all $j > 1$ and $X^i_j \equiv 1$ for all $j \in \N_0$, and interpreting $\tau^i_0$ as the decease time of person $i$, where $i = 1,...,n$. The filtration $\Gb$ is hence reduced to 
\[
	\Gb = \Fb \vee \Hb^1 \vee ... \vee \Hb^n,
\]	
where
\[
	\H^i_t = \sigma \left( \Ind{\tau^i_0 \leqslant s}, 0 \leqslant s \leqslant t \right), \ \ \ \ t \geqslant 0, \ i = 1,...,n.
\]
In particular, the $\Fb$-progressively measurable process $\mu$ is interpreted as mortality intensity in this context. 
The financial market is typically assumed to include mortality or longevity linked derivatives, such as longevity bond, which pays off the longevity index value $e^{- \int_0^T \mu_s \ud s}$ at maturity $T$.

Life insurance within hybrid market under this setting has been intensively studied in the literature, see e.g. \cite{Bar}, \cite{Bia-Sch}, \cite{Bia-Rhe-r} and \cite{Bia-Zha}. 

\subsection{Non-life insurance}\label{sec: non life}

The framework in Section \ref{sec: general framework} in its full generality describes the case of non-life insurance. Indeed, non-life insurance policies typically have reporting delay, i.e. $\theta^i \neq 0$, which can also count to several years. For $i$-th policy, we interpret $X^i_j$ as payment amount at the $j$-th random times $\tau^i_j$; the exact accident time $\tau^i_0$ and first payment amount $X^i_1$ is known only after reporting at time $\tau^i_1$. Further developments  may occur after the first reporting and before the settlement of claim. The total number of developments $(\tau^i_j)_{j \in \N_0}$ is unknown as well as the amount of corresponding payments $(X^i_j)_{j \in \N_0}$. The accident time $\tau^i_0$ admits an $\Fb$-progressively measurable intensity process $\mu$ related to the underlying risk. If liquidly traded derivatives related to the $\mu$ process are available on the financial market, they could be used for hedging systematic risks related to non-life portfolio.

The above described setting gives a nontrivial extension of the underlying frameworks in e.g. \cite{Del}, \cite{Bar-Lau}, \cite{Nor-Sav} and \cite{Nor-qua}. In e.g. \cite{Del} and \cite{Bar-Lau}, the reference filtration $\Fb$ is assumed to be independent of the insurance internal filtration $\Hb$ generated by the non-life portfolio movement. The interaction between the financial and the insurance markets is thus captured only by means of interest rate and/or inflation risk. On the contrary, in e.g. \cite{Nor-Sav} and \cite{Nor-qua}, it is assumed that $\Gb = \Hb = \Fb$. Financial products used for hedging purpose are in these cases liquidly traded catastrophe derivatives and/or reinsurance contracts, which share similar risk structure of the target non-life insurance portfolio.
Considering a more general setting, where $\Fb$ and $\Hb$ are not necessarily independent or equal, is technically challenging, as we discuss in Section \ref{sec: relation with compensator}.
However, the extended reduce-form framework proposed in this paper allows to consider a nontrivial dependence structure between filtrations $\Fb$ and $\Hb$ and still to derive analytical pricing formulas for non-life insurance liabilities. Furthermore, beside the financial instruments used in e.g. \cite{Del}, \cite{Bar-Lau}, \cite{Nor-Sav} and \cite{Nor-qua}, it is possible to use intensity $\mu$ related derivatives as hedging instrument, see discussion in Section \ref{sec: hybrid market}. This last type of derivatives is still not common but is potentially attractive for covering systematic risks arising from non-catastrophe non-life insurance.

\section{Valuation formulas}\label{sec: preliminary results}
In this Section, we state several results within the framework presented in Section \ref{sec: general framework}, i.e. under the structure Assumptions \ref{ass: homo and indep}, \ref{ass: delay} and \ref{ass: accident times}. We follow Section 5.1 of \cite{Bie-Rut} for the presentation. These preliminary calculations are fundamental for providing pricing formulas of non-life insurance claims in Section \ref{sec: pricing non-life}.

We start with an extension of relation (\ref{eq: cond exp tau 0}) and the $\Fb$-independence (\ref{eq: F independent tau 0}) of $\tau^i_0$, $i=1,...,n$.
In particular, if these relations hold for the filtrations $\Hb^{i,0}$, $i=1,...,n$, then they also hold for the filtrations $\Hb^i$, $i = 1,...,n$.

\begin{lemma}\label{lemma: H as F indep}
	For any $t \in [0, \infty]$ and $l,k = 1,...,n$ with $l \neq k$, the $\sigma$-algebras $\H^l_t$ and $\H^k_t$ are $\F_t$-independent. 
	Furthermore, for any $0 \leqslant s \leqslant t \leqslant \infty$ and $i = 1,...,n$, if $X$ is $\H^i_s$-measurable, then
	$\left. E \left[ X \right| \F_t\right] = \left. E \left[ X \right| \F_s \right]$.
\end{lemma}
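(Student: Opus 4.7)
The plan is to prove the second statement first (the stability $E[X \mid \F_t] = E[X \mid \F_s]$ for any $\H^i_s$-measurable $X$) and then obtain the first from it combined with the $\F_\infty$-conditional independence of $\H^l_\infty$ and $\H^k_\infty$. Throughout I would work with the development vector $Z^i := (\theta^i, X^i_1, (\tilde{\tau}^i_j, \tilde{X}^i_j)_{j \in \N_+})$; chaining parts 3, 4, 5 of Assumption \ref{ass: homo and indep}, $Z^i$ is independent of $\F_\infty \vee \sigma(\tau^i_0)$, and $Z^l, Z^k$ are mutually independent and jointly independent of $\F_\infty \vee \sigma(\tau^1_0, \ldots, \tau^n_0)$.

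For the second statement, I would first use (\ref{eq: relation tau1 tau0}), (\ref{eq: specif MP 1}), (\ref{eq: specif MP 2}) and Corollary \ref{cor: representation Hi} to identify $\H^i_\infty = \sigma(\tau^i_0, Z^i)$, so every bounded $\H^i_s$-measurable $X$ has the form $X = f(\tau^i_0, Z^i)$ for some measurable $f$. Setting $\phi(a) := E[f(a, Z^i)]$, the independence of $Z^i$ from $\F_\infty \vee \sigma(\tau^i_0)$ together with the tower property yields $E[X \mid \F_t] = E[\phi(\tau^i_0) \mid \F_t]$ (and likewise with $s$ in place of $t$). The key step is to show that $X$ is constant on the no-report event $\{\tau^i_1 > s\}$: on the generators $\Ind{\tau^i_j \leqslant u}\Ind{\Theta^i_j \in B}$ with $u \leqslant s$ and $j \in \N_+$, this is immediate from $\tau^i_j \geqslant \tau^i_1 > s \geqslant u$, and a monotone class argument extends it to all $\H^i_s$-measurable functions. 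Since $\theta^i \geqslant 0$ forces $\{\tau^i_0 > s\} \subseteq \{\tau^i_1 > s\}$, I would conclude $f(a, z) = c$ for all $a > s$ and $P_{Z^i}$-a.e. $z$, hence $\phi \equiv c$ on $(s, \infty)$. Splitting
\[
E\left[\phi(\tau^i_0) \mid \F_t\right] = \int_0^s \phi(a) \, P\left(\tau^i_0 \in da \mid \F_t\right) + c \cdot P\left(\tau^i_0 > s \mid \F_t\right)
\]
and using (\ref{eq: cond exp tau 0}) to replace $P(\tau^i_0 > a \mid \F_t) = P(\tau^i_0 > a \mid \F_a)$ for every $a \in [0, s]$ (and $P(\tau^i_0 > s \mid \F_t) = P(\tau^i_0 > s \mid \F_s)$) shows the right-hand side is $\F_s$-measurable and coincides with $E[\phi(\tau^i_0) \mid \F_s]$, giving $E[X \mid \F_t] = E[X \mid \F_s]$.

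For the first statement, applying the second with outer filtration $\F_\infty$ gives $E[X \mid \F_\infty] = E[X \mid \F_t]$ for every bounded $\H^l_t$-measurable $X$ and similarly $E[Y \mid \F_\infty] = E[Y \mid \F_t]$ for bounded $\H^k_t$-measurable $Y$. The $\F_\infty$-conditional independence of $\H^l_\infty$ and $\H^k_\infty$ follows from $\H^i_\infty = \sigma(\tau^i_0, Z^i)$ by combining (\ref{eq: F independent tau 0}) with the independence properties of $(Z^l, Z^k)$ collected in the first paragraph. Hence by the tower property,
\[
E\left[XY \mid \F_t\right] = E\left[E[X \mid \F_\infty]\, E[Y \mid \F_\infty] \mid \F_t\right] = E[X \mid \F_t]\, E[Y \mid \F_t].
\]
The hard part will be the careful identification of the atom structure of $\H^i_s$ on the no-report event $\{\tau^i_1 > s\}$ through the monotone class reduction, which underlies $\phi \equiv c$ on $(s, \infty)$; everything else reduces to bookkeeping with the independence assumptions in Assumption \ref{ass: homo and indep} and the immersion condition (\ref{eq: cond exp tau 0}).
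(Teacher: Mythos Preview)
Your argument is correct and uses exactly the ingredients the paper invokes (Lemma~\ref{lemma: representation Hi}, Remark~\ref{rem: H0 are F independent}, and Assumption~\ref{ass: homo and indep}); since the paper only sketches the proof and defers details to \cite{Zhang}, your write-up is in fact a careful fleshing-out of that sketch. One minor imprecision: from ``$X=c$ on $\{\tau^i_1>s\}$'' and $\{\tau^i_0>s\}\subseteq\{\tau^i_1>s\}$ you only obtain $\phi(\tau^i_0)=c$ $P$-a.s.\ on $\{\tau^i_0>s\}$ (equivalently $\phi(a)=c$ for $P_{\tau^i_0}$-a.e.\ $a>s$, not literally all $a>s$), but that is precisely what you need, since it makes $\phi(\tau^i_0)$ $P$-a.s.\ equal to the $\H^{i,0}_s$-measurable variable $\phi(\tau^i_0)\Ind{\tau^i_0\leqslant s}+c\,\Ind{\tau^i_0>s}$, and then Remark~\ref{rem: H0 are F independent} applies directly.
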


\begin{proof}
	The proof is straightforward in view of Lemma \ref{lemma: representation Hi}, Remark \ref{rem: H0 are F independent} and the independence conditions in Assumption \ref{ass: homo and indep}. 
	We outline the main steps for the first part of the Lemma for completeness, the second part can be shown in a similar manner. See also proof of Lemma 3.2.1 and Lemma 3.2.2 in \cite{Zhang}.\\
	In view of Lemma \ref{lemma: representation Hi}, (\ref{eq: specif MP 1}) and (\ref{eq: specif MP 2}), for any $t \in [0, \infty]$, $l,k = 1,...,n$ with $l \neq k$, the $\sigma$-algebras $\H^l_t$ and $\H^k_t$ are $\F_t$-independent if and only if 
	\begin{align*}
		&\left. E \left[ \Ind{\tau^l_0 + \theta^l + \tilde{\tau}^l_p \leqslant s} \Ind{\tilde{X}^l_p \in B^l} \Ind{\tau^k_0 + \theta^k + \tilde{\tau}^k_q \leqslant r} \Ind{\tilde{X}^k_p \in B^k} \right| \F_t \right]\\
		=& \left. E \left[ \Ind{\tau^l_0 + \theta^l + \tilde{\tau}^l_p \leqslant s} \Ind{\tilde{X}^l_p \in B^l} \right| \F_t \right] \left. E \left[ \Ind{\tau^k_0 + \theta^k + \tilde{\tau}^k_q \leqslant r} \Ind{\tilde{X}^k_p \in B^k} \right| \F_t \right],
	\end{align*}
	where $s,r \in [0,t] \cap [0, \infty)$ (note that $t$ may assume $\infty$.), $B^l, B^k \in \mathcal{B}(\R_+)$ and $p, q \in \N_0$. Without loss of generality, we take $p \neq 1$ and $q \neq 1$. We consider the following deterministic functions
	\[
		f^l (x) := E \left[  \Ind{\theta^l + \tilde{\tau}^l_p \leqslant s - x} \Ind{\tilde{X}^l_p \in B^l} \right], \ \ \ \ f^l (x) = f^l (x)\Ind{x \leqslant s},
	\]
	\[
		f^k (x) := E \left[  \Ind{\theta^k + \tilde{\tau}^k_q \leqslant r - x} \Ind{\tilde{X}^k_q \in B^k} \right], \ \ \ \ f^k (x) = f^k (x)\Ind{x \leqslant r}.
	\]
	Hence, $f^l (\tau^l_0)$ and $f^k (\tau^k_0)$ are respectively $\H^{l,0}_t$- and $\H^{k,0}_t$-measurable. This together with Remark \ref{rem: H0 are F independent} and the independence conditions in Assumption \ref{ass: homo and indep} leads to the	$\F_t$-independence of $\H^l_t$ and $\H^k_t$,
	\begin{align*}
		&\left. E \left[ \Ind{\tau^l_0 + \theta^l + \tilde{\tau}^l_p \leqslant s} \Ind{\tilde{X}^l_p \in B^l} \Ind{\tau^k_0 + \theta^k + \tilde{\tau}^k_q \leqslant r} \Ind{\tilde{X}^k_p \in B^k} \right| \F_t \right]\\
		=&\left. E \left[ \left. E \left[  \Ind{\tau^l_0 + \theta^l + \tilde{\tau}^l_p \leqslant s} \Ind{\tilde{X}^l_p \in B^l} \Ind{\tau^k_0 + \theta^k + \tilde{\tau}^k_q \leqslant r} \Ind{\tilde{X}^k_p \in B^k} \right| \F_t \vee \sigma(\tau_0^l) \vee \sigma(\tau_0^k) \right] \right| \F_t \right]\\
		=&\left.  E \left[ \left. E \left[  \Ind{x + \theta^l + \tilde{\tau}^l_p \leqslant s} \Ind{\tilde{X}^l_p \in B^l} \Ind{y + \theta^k + \tilde{\tau}^k_q \leqslant r} \Ind{\tilde{X}^k_p \in B^k} \right]\right|_{\substack{x = \tau^l_0\\ y = \tau^k_0}} \right| \F_t \right]\\
		=&\left. E \left[ f^l (\tau^l_0) f^k (\tau^k_0) \right| \F_t \right]\\
		=&\left. E \left[ f^l (\tau^l_0)\right| \F_t \right]\left. E \left[  f^k (\tau^k_0) \right| \F_t \right]\\
		=& \left. E \left[ \Ind{\tau^l_0 + \theta^l + \tilde{\tau}^l_p \leqslant s} \Ind{\tilde{X}^l_p \in B_l} \right| \F_t \right] \left. E \left[ \Ind{\tau^k_0 + \theta^k + \tilde{\tau}^k_q \leqslant r} \Ind{\tilde{X}^k_p \in B_k} \right| \F_t \right].
	\end{align*}
\end{proof}

\noindent As a consequence of Lemma \ref{lemma: H as F indep}, the $\Gb$-conditional expectation can be reduced to $\Fb \vee \Hb^i$-conditional expectation in most cases.

\begin{cor}\label{cor:reduction G to H}
	If $0 \leqslant t \leqslant T < \infty$, and $Y$ is an integrable $(\F_T \vee \H^{i}_T)$-measurable random variable, then
	\[
		\left. E \left[ Y \right| \G_t \right] = \left. E \left[ Y \right| \F_t \vee \H^{i}_t \right].
	\]
\end{cor}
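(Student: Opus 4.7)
The plan is to apply the standard immersion/key-lemma template and show that $\H^{-i}_t := \bigvee_{k \neq i} \H^k_t$ can be deleted from the conditioning $\sigma$-algebra $\G_t = (\F_t \vee \H^i_t) \vee \H^{-i}_t$, because $\H^{-i}_t$ is $(\F_t \vee \H^i_t)$-conditionally independent of $(\F_T \vee \H^i_T)$.

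First I would reduce the claim to a pairing against sufficient generators of $\G_t$ by a monotone-class argument: since $\G_t$ is generated by intersections $B \cap C$ with $B \in \F_t \vee \H^i_t$ and $C \in \H^{-i}_t$, the corollary amounts to the identity
\[
E[Y \I_B \I_C] \;=\; E\bigl[\, E[Y \mid \F_t \vee \H^i_t]\, \I_B \I_C \,\bigr]
\]
for every such $B$ and $C$ and every bounded $(\F_T \vee \H^i_T)$-measurable $Y$.

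Next I would manipulate the left-hand side by conditioning on $\F_T \vee \H^i_T$, with respect to which $Y \I_B$ is already measurable:
\[
E[Y \I_B \I_C] \;=\; E\bigl[\, Y \I_B\, E[\I_C \mid \F_T \vee \H^i_T] \,\bigr].
\]
The decisive step is to collapse the inner conditional expectation down to $\F_t$. For this I would use two consequences of Lemma \ref{lemma: H as F indep} (with mild strengthenings): (a) mutual $\F_T$-conditional independence of $\H^1_T,\dots,\H^n_T$, which gives $\F_T$-conditional independence of $\H^{-i}_T$ and $\H^i_T$ and hence $E[\I_C \mid \F_T \vee \H^i_T] = E[\I_C \mid \F_T]$; and (b) the immersion-type identity $E[X \mid \F_T] = E[X \mid \F_t]$ for $X$ that is $\H^{-i}_t$-measurable, applied here to $X = \I_C$. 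Combined, $E[\I_C \mid \F_T \vee \H^i_T] = E[\I_C \mid \F_t]$, which is $\F_t$- and in particular $(\F_t \vee \H^i_t)$-measurable.

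Finally I would close the argument cleanly. Writing $\psi := E[\I_C \mid \F_t]$, the above gives $E[Y \I_B \I_C] = E[Y \I_B \psi]$; since $\I_B \psi$ is $(\F_t \vee \H^i_t)$-measurable, the definition of conditional expectation yields $E[Y \I_B \psi] = E[\,E[Y \mid \F_t \vee \H^i_t]\, \I_B \psi\,]$; running the same chain backwards with $E[Y \mid \F_t \vee \H^i_t]\I_B$ in place of $Y\I_B$ converts $\I_B \psi$ back to $\I_B \I_C$ and produces the right-hand side. The main obstacle is the strengthening of Lemma \ref{lemma: H as F indep} from pairwise to mutual $\F_T$-conditional independence of $\H^1_T,\dots,\H^n_T$, and from single $\H^k$-measurability to $\H^{-i}$-measurability in the immersion part. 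Both upgrades should follow from the factorisation of the joint law of the $n$ policy processes conditional on $\F_\infty$ built into Assumption \ref{ass: homo and indep}, by repeating the monotone-class argument underlying Lemma \ref{lemma: H as F indep}, but they are not formally implied by the pairwise statement as literally written.
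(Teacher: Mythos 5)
Your proof takes essentially the same approach as the paper's terse argument: reduce to indicators, then strip $\H^{-i}_t := \bigvee_{k \neq i}\H^k_t$ out of the conditioning by combining the $\F$-conditional independence and immersion facts of Lemma~\ref{lemma: H as F indep}. You are also right to flag that the lemma as literally stated gives only pairwise conditional independence and single-index immersion, whereas the argument needs the joint $\H^{-i}$ versions; this is precisely the detail the paper delegates to \cite{Zhang}, and it follows from the mutual-independence structure of Assumption~\ref{ass: homo and indep} by the same monotone-class reasoning you indicate.
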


\begin{proof}
	It is sufficient to prove the statement for the indicator functions of the form $Y = \mathbf{1}_A \mathbf{1}_B$ where $A \in \F_T$ and $B \in \H^i_T$. Given $C \in \F_t$, $D^j \in \H^j_t$, $j = 1,...,n$, it follows from Lemma \ref{lemma: H as F indep} that 
	\[
		\int_{C \cap D^1 \cap ... \cap D^n}  \mathbf{1}_A \mathbf{1}_B \ud P = \int_{C \cap D^1 \cap ... \cap D^n}   \left. E \left[ \mathbf{1}_A \mathbf{1}_B \right| \F_t \vee \H^{i}_t \right] \ud P.
	\]
	Details can also be found in the proof of Corollary 3.2.3 in \cite{Zhang}.
\end{proof}

\noindent An other important consequence of Lemma \ref{lemma: H as F indep} is the so called \emph{$H$-hypothesis} between filtrations $\Fb$ and $\Gb$, i.e. the property that every $\Fb$-martingale is also a $\Gb$-martingale.

\begin{cor}
	The $H$-hypothesis holds between filtrations $\Fb$ and $\Gb$.
\end{cor}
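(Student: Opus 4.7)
The plan is to reduce the $H$-hypothesis to a conditional independence statement and then exploit Lemma \ref{lemma: H as F indep} twice. Recall that the immersion of $\Fb$ into $\Gb$ is equivalent to the following: for every $0 \leqslant s \leqslant t$, the $\sigma$-algebras $\F_t$ and $\G_s$ are $\F_s$-conditionally independent. Indeed, once this is established, for any $\Fb$-martingale $M$ we obtain $\condesph[\G_s]{M_t} = \condesph[\F_s]{M_t} = M_s$, since $M_t$ is $\F_t$-measurable, and this is exactly the $\Gb$-martingale property.

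Since $\G_s = \F_s \vee \H_s$ and $\F_s \subseteq \F_t$, a standard monotone class argument on generating sets of the form $A' \cap B'$, with $A' \in \F_s$ and $B' \in \H_s$, shows that the required conditional independence is equivalent to the $\F_s$-conditional independence of $\F_t$ and $\H_s$. By the usual equivalence between conditional independence and the tower-collapsing property, this in turn is equivalent to verifying that for every bounded $\H_s$-measurable random variable $X$,
\[
	\condesph[\F_t]{X} = \condesph[\F_s]{X}.
\]

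To establish the displayed identity I would apply a monotone class argument once more: since $\H_s = \H^1_s \vee \cdots \vee \H^n_s$, it is enough to treat products $X = X_1 \cdots X_n$ with each $X_i$ bounded and $\H^i_s$-measurable. For such a product, using that $\{\H^i_t\}_{i=1}^n$ are $\F_t$-independent and $\{\H^i_s\}_{i=1}^n$ are $\F_s$-independent (both consequences of the first part of Lemma \ref{lemma: H as F indep}), together with the second part of that lemma applied to each factor, I would write
\[
	\condesph[\F_t]{X_1 \cdots X_n} = \prod_{i=1}^n \condesph[\F_t]{X_i} = \prod_{i=1}^n \condesph[\F_s]{X_i} = \condesph[\F_s]{X_1 \cdots X_n},
\]
which is exactly what is needed.

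The only delicate point is the chain of monotone class extensions: one must check that the class of $\H_s$-measurable random variables for which the identity holds is closed under monotone limits, and that the generating algebra of products $X_1 \cdots X_n$ is stable under multiplication — both standard. Apart from this bookkeeping, the proof is a direct consequence of the conditional independence structure already encoded in Lemma \ref{lemma: H as F indep}.
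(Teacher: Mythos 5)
Your proof is correct and takes essentially the same route as the paper: the paper's one-line proof cites Lemma 6.1.1 of \cite{Bie-Rut} for precisely the equivalence between the $H$-hypothesis and the tower-collapsing property that you re-derive through the conditional-independence characterization, and then both you and the paper invoke Lemma \ref{lemma: H as F indep} together with a monotone class argument across the policies to verify that property. The only point worth making explicit is that your factorization $\condespf{X_1 \cdots X_n} = \prod_i \condespf{X_i}$ requires \emph{mutual} $\F_t$-conditional independence of $\H^1_t, \ldots, \H^n_t$, which is the intended content of Lemma \ref{lemma: H as F indep} even though its first assertion is phrased pairwise.
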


\begin{proof}
	By Lemma 6.1.1 of \cite{Bie-Rut}, it is  equivalent to the $H$-hypothesis between two filtrations $\Fb \subseteq \Gb$ that for any $t \geqslant 0$ and any bounded, $\G_t$-measurable random variable $\eta$, it holds
	\begin{equation}\label{eq: eta}
		E\left. \left[\eta \right| \F_\infty \right] = E \left. \left[\eta \right| \F_t \right].
	\end{equation}
	It is sufficient to prove (\ref{eq: eta}) for indicator functions of the form $\mathbf{1}_A \mathbf{1}_{B^1} \ldots \mathbf{1}_{B^n}$, where $A \in \F_t$, $B^i \in \H^i_t$, $i = 1,...,n$. This can be achieved by applying Lemma \ref{lemma: H as F indep} multiple times to get
	\begin{align*}
		E\left. \left[\mathbf{1}_A \mathbf{1}_{B^1} \ldots \mathbf{1}_{B^n} \right| \F_\infty \right] 
		&=  E\left. \left[\mathbf{1}_A \mathbf{1}_{B^1} \ldots \mathbf{1}_{B^n} \right| \F_t \right].
	\end{align*}
\end{proof}

Now we would like to derive some more explicit representations. We note that for every integrable random variable $Y$, $t \geqslant 0$, $i = 1,...,n$ and $j \in \N_0$, we have the decomposition
\begin{equation}\label{eq: decomp Y}
	\left. E \left[  Y  \right| {\H}^{i}_t \vee \F_t \right] = \left. E \left[ \Ind{\tau^i_j > t} Y  \right| {\H}^{i}_t \vee \F_t \right] + \left. E \left[ \Ind{\tau^i_j \leqslant t} Y  \right| {\H}^{i}_t \vee \F_t \right].
\end{equation}
In the following we will evaluate separately the two components on the right-hand side of (\ref{eq: decomp Y}).
The following lemma is important for deriving a representation of the first component.

\begin{lemma}\label{lemma: sigma algebra with tau ij > t}
	For any $t \geqslant 0$, $i = 1,...,n$ and $j \in \N_0$, we have 
	\[
		\H^i_t  \vee \F_t \subseteq \G^{i,j}_t,
	\]
	where
	\begin{equation}\label{eq:G ij}
		\G^{i,j}_t := \left\{ A \in \G : \exists C \in \H^{i, <j}_t \vee \F_t, A \cap \{\tau^i_j > t \} = C \cap \{\tau^i_j > t \} \right\}.
	\end{equation}
\end{lemma}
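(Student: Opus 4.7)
My plan is to verify the inclusion by first checking that $\G^{i,j}_t$ is itself a $\sigma$-algebra, then decomposing $\H^i_t$ via Corollary \ref{cor: representation Hi}, and finally checking each piece of the decomposition separately.

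First, I would verify the $\sigma$-algebra properties of $\G^{i,j}_t$ from its definition \eqref{eq:G ij}. The set $\Omega$ is an element (with witness $C = \Omega$). Closure under complements: if $A \cap \{\tau^i_j > t\} = C \cap \{\tau^i_j > t\}$ with $C \in \H^{i,<j}_t \vee \F_t$, then $A^c \cap \{\tau^i_j > t\} = \{\tau^i_j > t\} \setminus (C \cap \{\tau^i_j > t\}) = C^c \cap \{\tau^i_j > t\}$, and $C^c \in \H^{i,<j}_t \vee \F_t$. Closure under countable unions follows by distributivity of intersection over union. These are routine verifications.

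Next, by Corollary \ref{cor: representation Hi} we have $\H^i_t \vee \F_t = \H^{i, <j}_t \vee \H^{i, \geqslant j}_t \vee \F_t$. So it suffices to check the three inclusions $\F_t \subseteq \G^{i,j}_t$, $\H^{i, <j}_t \subseteq \G^{i,j}_t$, and $\H^{i, \geqslant j}_t \subseteq \G^{i,j}_t$, which by the $\sigma$-algebra property of $\G^{i,j}_t$ then combine to give the claim. The first two inclusions are immediate: for any $A$ in either $\F_t$ or $\H^{i, <j}_t$ one simply takes the witness $C := A$.

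The main step, and the only one requiring an idea, is the inclusion $\H^{i, \geqslant j}_t \subseteq \G^{i,j}_t$. Since $\G^{i,j}_t$ is a $\sigma$-algebra, it is enough to check this on a generating family. For $k \geqslant j$, the $\sigma$-algebra $\H^{i,k}_t$ is generated by sets of the form $\{\tau^i_k \leqslant s\} \cap \{\Theta^i_k \in B\}$ with $0 \leqslant s \leqslant t$ and $B$ Borel. Using the strict ordering \eqref{eq: simple}, on the event $\{\tau^i_k \leqslant s\}$ we have $\tau^i_j \leqslant \tau^i_k \leqslant s \leqslant t$, so
\[
\bigl(\{\tau^i_k \leqslant s\} \cap \{\Theta^i_k \in B\}\bigr) \cap \{\tau^i_j > t\} = \emptyset,
\]
and $\emptyset$ equals $\emptyset \cap \{\tau^i_j > t\}$ with $C := \emptyset \in \H^{i,<j}_t \vee \F_t$. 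Hence every generator of $\H^{i,k}_t$ lies in $\G^{i,j}_t$, which yields $\H^{i,k}_t \subseteq \G^{i,j}_t$ for each $k \geqslant j$, and therefore $\H^{i, \geqslant j}_t \subseteq \G^{i,j}_t$. I expect no real obstacle here beyond careful bookkeeping; the whole point of the definition of $\G^{i,j}_t$ is that, in restriction to $\{\tau^i_j > t\}$, all future reporting information becomes trivial thanks to \eqref{eq: simple}.
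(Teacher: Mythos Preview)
Your proof is correct and follows essentially the same route as the paper: decompose $\H^i_t$ via Corollary~\ref{cor: representation Hi}, take $C=A$ for sets in $\H^{i,<j}_t \vee \F_t$, and take $C=\emptyset$ for the generators of $\H^{i,\geqslant j}_t$ using the strict ordering \eqref{eq: simple}. The only difference is that you explicitly verify $\G^{i,j}_t$ is a $\sigma$-algebra, which the paper leaves implicit but indeed needs for the generator argument to go through.
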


\begin{proof}
	By Corollary \ref{cor: representation Hi}, it holds that
	\[
		\H^i_t = \H^{i, < j}_t \vee \H^{i, \geqslant j}_t. 
	\]
	Hence, it is sufficient to show that both ${\H}^{i, \geqslant j}_t$ and ${\H}^{i, < j}_t \vee \F_t$ belong to $\G^{i,j}_t$. In the first case, if $i > 1$ and $A = \{\tau^i_k \leqslant s\} \cap \{ X^i_k \in B\}$ for some $k \geqslant j$, $0 \leqslant s \leqslant t$ and $B \in \mathcal{B}(\R)$, we take $C = \emptyset$. Similarly for $i=1$ and $A = \{\tau^1_k \leqslant s\} \cap \{(\theta_k, X^1_k) \in B\}$ for $k \geqslant j$, $0 \leqslant s \leqslant t$ and $B \in \mathcal{B}(\R^2_+)$. In the second case, if $A \in {\H}^{i, < j}_t \vee \F_t$ we take $C = A$.
\end{proof}

\noindent The following Proposition gives two representations of the first component on the right-hand side of (\ref{eq: decomp Y}). Representation (\ref{eq: representation}) is analogue to Lemma 5.1.2. in \cite{Bie-Rut}, representation (\ref{eq: representation 2}) is new and will be used for our further discussion. 

\begin{prop}\label{prop: ind Y}
	For any $t \geqslant 0$, $i = 1,...,n$, $j \in \N_0$ and any integrable $\G$-measurable  random variable $Y$, we have 
	\begin{align}
		\left. E \left[ \Ind{\tau^i_j > t} Y  \right| {\H}^{i}_t \vee \F_t \right] &= \Ind{\tau^i_j > t} \frac{\left. E \left[ \Ind{\tau^i_j > t} Y  \right| \H^{i, <j}_t \vee \F_t \right]}{\left.P \left( \tau^i_j > t \right| \H^{i, <j}_t \vee \F_t \right)} \label{eq: representation} \\
		&= \Ind{\tau^i_j > t} \left. E \left[ Y  \right| \H^{i, \leqslant j}_t \vee \F_t \right]. \label{eq: representation 2}
	\end{align}
\end{prop}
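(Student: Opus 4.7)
The plan is to first establish the classical representation (\ref{eq: representation}) using Lemma \ref{lemma: sigma algebra with tau ij > t} via the standard ``reduction-to-trace'' argument from the reduced-form literature, and then to derive the new representation (\ref{eq: representation 2}) as an almost immediate consequence using the tower property.

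For (\ref{eq: representation}), set $\mathcal{D}_t := \H^{i,<j}_t \vee \F_t$ and
\[
U := \frac{E[\Ind{\tau^i_j > t} Y \mid \mathcal{D}_t]}{P(\tau^i_j > t \mid \mathcal{D}_t)},
\]
with the convention that the ratio equals $0$ whenever the denominator vanishes; on that event $\Ind{\tau^i_j > t}$ is $P$-a.s.\ zero, so the product $\Ind{\tau^i_j > t} U$ is unambiguous. Fix an arbitrary $A \in \H^i_t \vee \F_t$. By Lemma \ref{lemma: sigma algebra with tau ij > t} there exists $C \in \mathcal{D}_t$ with $A \cap \{\tau^i_j > t\} = C \cap \{\tau^i_j > t\}$. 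Iterated conditioning together with the $\mathcal{D}_t$-measurability of both $\Ind{C}$ and $U$ then yields
\begin{align*}
E[\Ind{A} \Ind{\tau^i_j > t} U]
&= E[\Ind{C} \Ind{\tau^i_j > t} U]
= E[\Ind{C} U \, P(\tau^i_j > t \mid \mathcal{D}_t)] \\
&= E[\Ind{C} E[\Ind{\tau^i_j > t} Y \mid \mathcal{D}_t]]
= E[\Ind{C} \Ind{\tau^i_j > t} Y]
= E[\Ind{A} \Ind{\tau^i_j > t} Y].
\end{align*}
Since $\Ind{\tau^i_j > t} U$ is $\H^i_t \vee \F_t$-measurable, this identifies it as a version of $E[\Ind{\tau^i_j > t} Y \mid \H^i_t \vee \F_t]$, proving (\ref{eq: representation}).

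For (\ref{eq: representation 2}), note that $\{\tau^i_j > t\} \in \H^{i,j}_t \subseteq \H^{i,\leqslant j}_t$ and that $U$ is $\mathcal{D}_t \subseteq \H^{i,\leqslant j}_t \vee \F_t$-measurable. Applying the tower property with $\H^{i,\leqslant j}_t \vee \F_t \subseteq \H^i_t \vee \F_t$ and substituting (\ref{eq: representation}) gives
\[
E[\Ind{\tau^i_j > t} Y \mid \H^{i,\leqslant j}_t \vee \F_t]
= E[\Ind{\tau^i_j > t} U \mid \H^{i,\leqslant j}_t \vee \F_t]
= \Ind{\tau^i_j > t} U,
\]
while pulling the $\H^{i,\leqslant j}_t$-measurable factor $\Ind{\tau^i_j > t}$ out of the left-hand side produces $\Ind{\tau^i_j > t} E[Y \mid \H^{i,\leqslant j}_t \vee \F_t]$; combining the two yields (\ref{eq: representation 2}). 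The main conceptual obstacle is really packaged into Lemma \ref{lemma: sigma algebra with tau ij > t}: once the trace property on $\{\tau^i_j > t\}$ is available, both equalities follow from routine manipulations, with the only subtlety being a clean treatment of the possibly-null denominator.
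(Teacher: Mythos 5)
Your proof of \eqref{eq: representation} is essentially the paper's own: both arguments hinge on Lemma \ref{lemma: sigma algebra with tau ij > t} to replace an arbitrary $A \in \H^i_t \vee \F_t$ by some $C \in \H^{i,<j}_t \vee \F_t$ on the event $\{\tau^i_j > t\}$, then verify the defining integral identity of the conditional expectation by iterated conditioning. The only cosmetic difference is that you clear the denominator explicitly via the convention $0/0 = 0$, whereas the paper multiplies both sides by $P(\tau^i_j > t \mid \H^{i,<j}_t \vee \F_t)$ before comparing integrals; both are clean.

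For \eqref{eq: representation 2}, however, you take a genuinely different route. The paper observes that the inclusion $\G^{i,j}_t \subseteq \{A : \exists C \in \H^{i,\leqslant j}_t \vee \F_t, \, A \cap \{\tau^i_j > t\} = C \cap \{\tau^i_j > t\}\}$ also holds, and then reruns the entire trace argument with $\H^{i,\leqslant j}_t$ in place of $\H^{i,<j}_t$, after which the ratio collapses because $\Ind{\tau^i_j > t}$ is already $\H^{i,\leqslant j}_t$-measurable. You instead deduce \eqref{eq: representation 2} directly from \eqref{eq: representation} by computing $E[\Ind{\tau^i_j > t} Y \mid \H^{i,\leqslant j}_t \vee \F_t]$ in two ways: once through the tower property applied to the already-proved formula (giving $\Ind{\tau^i_j > t} U$, since $\Ind{\tau^i_j > t} U$ is $\H^{i,\leqslant j}_t \vee \F_t$-measurable), and once by pulling the $\H^{i,\leqslant j}_t$-measurable indicator out of the conditional expectation. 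Equating the two expressions yields \eqref{eq: representation 2} with no further appeal to the trace lemma. This is a slicker and more economical derivation of the second equality; it buys you a shorter proof at no loss of generality, at the mild cost of obscuring the symmetry between the two formulas that the paper's ``replace $<j$ by $\leqslant j$'' viewpoint makes visible. Both proofs are correct.
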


\begin{proof}
	Equality (\ref{eq: representation}) is equivalent to
	\begin{align*}
		\left. E \left[ \Ind{\tau^i_j > t} Y \left.P \left( \tau^i > t \right| \H^{i, <j}_t \vee \F_t\right)  \right| {\H}^{i}_t \vee \F_t \right] =  \Ind{\tau^i_j > t} \left. E \left[ \Ind{\tau^i_j > t} Y  \right| \H^{i, <j}_t \vee \F_t \right].
	\end{align*}
	We note that the right-hand side is $({\H}^{i}_t \vee \F_t)$-measurable. Hence, it suffices to  show that for any $A \in {\H}^{i}_t \vee \F_t$,
	\begin{align*}
		\int_A \Ind{\tau^i_j > t} Y \left.P \left( \tau^i_j > t \right|\H^{i, <j}_t \vee  \F_t \right) \ud P= \int_A \Ind{\tau^i_j > t} \left. E \left[ \Ind{\tau^i_j > t} Y  \right|\H^{i, <j}_t \vee  \F_t \right] \ud P.
	\end{align*}
	By Lemma \ref{lemma: sigma algebra with tau ij > t}, there is an event $C \in \H^{i, <j}_t \vee \F_t$ such that 
	\[
		A \cap \{ \tau^i_j > t \} = C \cap \{ \tau^i_j > t \},
	\]
	 hence
	\begin{align*}
		&\int_A \Ind{\tau^i_j > t} Y \left.P \left( \tau^i_j > t \right| \H^{i, <j}_t \vee \F_t \right) \ud P\\
		=&\int_C \Ind{\tau^i_j > t} Y \left.P \left( \tau^i_j > t \right|\H^{i, <j}_t \vee  \F_t \right)  \ud P \\
		=&\int_C \left.E \left[ \Ind{\tau^i_j > t} Y \right|\H^{i, <j}_t \vee  \F_t \right] \left.E \left[ \Ind{\tau^i_j > t} \right|\H^{i, <j}_t \vee  \F_t \right]  \ud P \\
		=&\int_C \left.E \left[ \Ind{\tau^i_j > t} \left.E \left[ \Ind{\tau^i_j > t} Y \right|\H^{i, <j}_t \vee  \F_t \right] \right| \H^{i, <j}_t \vee \F_t \right]  \ud P \\
		=&\int_C \Ind{\tau^i_j > t} \left.E \left[ \Ind{\tau^i_j > t} Y \right|\H^{i, <j}_t \vee  \F_t \right]  \ud P \\
		=& \int_A \Ind{\tau^i_j > t} \left. E \left[ \Ind{\tau^i_j > t} Y  \right| \H^{i, <j}_t \vee \F_t \right] \ud P.
	\end{align*}
	Equality (\ref{eq: representation 2}) can be proved in the same way. We only need to observe that 
	\[
		\G^{i,j}_t \subseteq \left\{ A \in \G : \exists C \in \H^{i, \leqslant j}_t \vee \F_t, A \cap \{\tau^i_j > t \} = C \cap \{\tau^i_j > t \} \right\}.
	\]
	Hence, the $\sigma$-algebra $\H^{i,<j}_t$ in (\ref{eq: representation}) can be replaced by $\H^{i,\leqslant j}_t$. This concludes the proof.
\end{proof}

\noindent Now we focus on the second component on the right-hand side of (\ref{eq: decomp Y}). The following lemma gives a slightly more general result.

\begin{lemma}\label{lemma: Ind Y tau ij smaller than t}
	For any $t \geqslant 0$, $i = 1,...,n$, $j \in \N_0$, any $\sigma$-algebra $\mathcal{A} \subseteq \G$ and any integrable $\G$-measurable  random variable $Y$, we have
	\[
		 \left. E \left[ \Ind{\tau^i_j \leqslant t} Y  \right| \H^{i,\leqslant j}_t \vee \mathcal{A} \right] =  \left. E \left[ \Ind{\tau^i_j \leqslant t} Y  \right| \H^{i,\leqslant j}_\infty \vee \mathcal{A} \right].
	\]
\end{lemma}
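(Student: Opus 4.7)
The plan is to show that $W := \left.E\left[\Ind{\tau^i_j \leqslant t} Y \right| \H^{i,\leqslant j}_t \vee \mathcal{A}\right]$ is already a version of $\left.E\left[\Ind{\tau^i_j \leqslant t} Y \right| \H^{i,\leqslant j}_\infty \vee \mathcal{A}\right]$. Since $W$ is automatically $(\H^{i,\leqslant j}_\infty \vee \mathcal{A})$-measurable, it suffices to verify the defining integral identity against a $\pi$-system generating $\H^{i,\leqslant j}_\infty \vee \mathcal{A}$.

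First I would establish the following analogue of Lemma \ref{lemma: sigma algebra with tau ij > t} on the complementary event: for every $A \in \H^{i,\leqslant j}_\infty$ there exists $A' \in \H^{i,\leqslant j}_t$ with $A \cap \{\tau^i_j \leqslant t\} = A' \cap \{\tau^i_j \leqslant t\}$. The heuristic is the strict ordering (\ref{eq: simple}): on $\{\tau^i_j \leqslant t\}$ every $\tau^i_k$ with $k \leqslant j$ also satisfies $\tau^i_k \leqslant t$, so by time $t$ the full information about $(\tau^i_k, \Theta^i_k)_{k \leqslant j}$ is already available in $\H^{i,\leqslant j}_t$. Concretely, on the generators $\{\tau^i_k \leqslant s\} \cap \{\Theta^i_k \in B\}$ of $\H^{i,k}_\infty$, if $s \leqslant t$ the set already lies in $\H^{i,k}_t$, while if $s > t$ intersection with $\{\tau^i_j \leqslant t\} \subseteq \{\tau^i_k \leqslant t\}$ reduces it to $\{\tau^i_k \leqslant t\} \cap \{\Theta^i_k \in B\} \in \H^{i,k}_t$ (take $s=t$ in the generators). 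A standard monotone class extension then handles all of $\H^{i,\leqslant j}_\infty$.

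Second, I observe $W \Ind{\tau^i_j > t} = 0$. Since $\Ind{\tau^i_j > t}$ is $\H^{i,j}_t$-measurable (choose the mark set to be $\R^2_+$ in the generators of $\H^{i,j}_t$), it factors inside the conditional expectation to give $W \Ind{\tau^i_j > t} = \left.E\left[\Ind{\tau^i_j > t} \Ind{\tau^i_j \leqslant t} Y \right| \H^{i,\leqslant j}_t \vee \mathcal{A}\right] = 0$, hence $W = W \Ind{\tau^i_j \leqslant t}$.

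Finally, a $\pi$-$\lambda$ argument reduces the identity $\int_B \Ind{\tau^i_j \leqslant t} Y \, \ud P = \int_B W \, \ud P$ for $B \in \H^{i,\leqslant j}_\infty \vee \mathcal{A}$ to sets of the form $B = A_1 \cap A_2$ with $A_1 \in \H^{i,\leqslant j}_\infty$ and $A_2 \in \mathcal{A}$. On such a $B$, the first step replaces $A_1$ by some $A_1' \in \H^{i,\leqslant j}_t$ on $\{\tau^i_j \leqslant t\}$, the defining property of $W$ then applies on $A_1' \cap A_2 \in \H^{i,\leqslant j}_t \vee \mathcal{A}$, and the reverse substitution is legitimate thanks to $W = W \Ind{\tau^i_j \leqslant t}$. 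The main obstacle I anticipate is carrying out the first step cleanly for the joint $\sigma$-algebra $\H^{i,\leqslant j}_\infty$ built from several $\H^{i,k}_\infty$; once this is in place the remainder is routine bookkeeping.
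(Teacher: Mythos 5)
Your proof is correct and follows essentially the same route as the paper's: both rest on the observation that, by the strict ordering (\ref{eq: simple}), any $A \in \H^{i,\leqslant j}_\infty \vee \mathcal{A}$ satisfies $A \cap \{\tau^i_j \leqslant t\} \in \H^{i,\leqslant j}_t \vee \mathcal{A}$, and then the defining integral identity for the conditional expectation is immediate. The paper states this set-theoretic fact without proof and integrates directly over an arbitrary $A$; you instead verify it on generators, pass to $\H^{i,\leqslant j}_\infty$ by a monotone class argument, and close the gap with a final $\pi$-$\lambda$ step together with the observation $W = W\Ind{\tau^i_j \leqslant t}$ — added detail, but the same underlying mechanism.
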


\begin{proof}
	We note that the left-hand side is $(\H^{i,\leqslant j}_\infty \vee \mathcal{A})$-measurable. Since the marked point process $(\tau^i_j, \Theta^i_j)_{j \in \N_0}$ is simple, i.e. the strict monotonicity (\ref{eq: simple}) holds, if $A \in \H^{i,\leqslant j}_\infty \vee \mathcal{A}$, then $A \cap \{\tau^i_j \leqslant t\} \in \H^{i,\leqslant j}_t \vee \mathcal{A}$, and
	\begin{align*}
		\int_{A} \Ind{\tau^i_j \leqslant t} Y  \ud P &= \int_{A \cap \{\tau^i_j \leqslant t\}} Y \ud P = \int_{A \cap \{\tau^i_j \leqslant t\}} \left. E \left[ Y  \right| \H^{i,\leqslant j}_t \vee \mathcal{A} \right] \ud P\\
		&= \int_A \left. E \left[ \Ind{\tau^i_j \leqslant t} Y  \right| \H^{i,\leqslant j}_t \vee \mathcal{A} \right] \ud P.
	\end{align*}
	This concludes the proof.
\end{proof}

\begin{rem}
	 Since we have
\[
	\H^{i,\leqslant j}_\infty = \sigma \left(\tau^i_h, h=1,...,j \right),
\]
Lemma \ref{lemma: Ind Y tau ij smaller than t} shows that, if $\tau^i_j$ has occurred before time $t$, then partial information about $\tau^i_j$ up to $t$ is equivalent to full information about all the random times $\tau^i_h$, $h=1,...,j$. In particular, if $Y$ is a function of $\tau^i_1, ..., \tau^i_j$, i.e. $Y = f (\tau^i_1, ..., \tau^i_j)$, then the conditional expectation is simply
\[
	\left. E \left[ \Ind{\tau^i_j \leqslant t} Y  \right| \H^{i,\leqslant j}_t \vee \mathcal{A} \right] = \Ind{\tau^i_j \leqslant t} Y.
\]
\end{rem}

 We summarize the above results in the following representation theorem.

\begin{theorem}\label{theo: G representation 1}
	For any $t \geqslant 0$, $i = 1,...,n$, $j \in \N_0$ and any integrable $\G$-measurable  random variable $Y$, we have
	\[
		\left. E \left[ Y  \right| {\H}^{i}_t \vee \F_t \right] = \Ind{\tau^i_j \leqslant t} \left. E \left[ Y  \right| \H^{i,\leqslant j}_\infty \vee {\H}^{i, > j}_t \vee \F_t \right] + \Ind{\tau^i_j > t} \left. E \left[ Y  \right| \H^{i, \leqslant j}_t \vee \F_t \right].
	\]
	If furthermore $Y$ is $({\H}^{i}_T \vee \F_T)$-measurable, then 
	\[
		\left. E \left[ Y  \right| \G_t \right] = \Ind{\tau^i_j \leqslant t} \left. E \left[ Y  \right| \H^{i,\leqslant j}_\infty \vee {\H}^{i, > j}_t \vee \F_t \right] + \Ind{\tau^i_j > t} \left. E \left[ Y  \right| \H^{i, \leqslant j}_t \vee \F_t \right].
	\]
\end{theorem}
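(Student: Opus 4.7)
The plan is to combine decomposition (\ref{eq: decomp Y}) with Proposition \ref{prop: ind Y} for the $\{\tau^i_j > t\}$ part and Lemma \ref{lemma: Ind Y tau ij smaller than t} for the $\{\tau^i_j \leqslant t\}$ part. So I would first split the left-hand side via the identity
\[
	\left. E \left[ Y \right| \H^i_t \vee \F_t \right] = \left. E \left[ \Ind{\tau^i_j > t} Y \right| \H^i_t \vee \F_t \right] + \left. E \left[ \Ind{\tau^i_j \leqslant t} Y \right| \H^i_t \vee \F_t \right],
\]
and handle the two summands separately.

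For the first summand, representation (\ref{eq: representation 2}) in Proposition \ref{prop: ind Y} directly yields
\[
	\left. E \left[ \Ind{\tau^i_j > t} Y \right| \H^i_t \vee \F_t \right] = \Ind{\tau^i_j > t} \left. E \left[ Y \right| \H^{i,\leqslant j}_t \vee \F_t \right],
\]
so nothing more is required. For the second summand, I would use Corollary \ref{cor: representation Hi} to rewrite the conditioning $\sigma$-algebra as $\H^i_t \vee \F_t = \H^{i,\leqslant j}_t \vee \H^{i,> j}_t \vee \F_t$, and then apply Lemma \ref{lemma: Ind Y tau ij smaller than t} with the choice $\mathcal A := \H^{i,> j}_t \vee \F_t$ to obtain
\[
	\left. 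E \left[ \Ind{\tau^i_j \leqslant t} Y \right| \H^{i,\leqslant j}_t \vee \H^{i,> j}_t \vee \F_t \right] = \left. E \left[ \Ind{\tau^i_j \leqslant t} Y \right| \H^{i,\leqslant j}_\infty \vee \H^{i,> j}_t \vee \F_t \right].
\]
Since $\tau^i_j$ is $\H^{i,\leqslant j}_\infty$-measurable, the indicator $\Ind{\tau^i_j \leqslant t}$ is measurable with respect to the conditioning $\sigma$-algebra on the right and can be pulled out, which gives exactly the first summand on the right-hand side of the theorem.

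Adding the two contributions proves the first identity. For the second identity, the hypothesis that $Y$ is $(\H^i_T \vee \F_T)$-measurable together with Corollary \ref{cor:reduction G to H} gives $\left. E[Y | \G_t] = E[Y | \H^i_t \vee \F_t]\right.$, so the second statement follows immediately from the first. The only genuinely delicate step is the second summand: one must be careful that Lemma \ref{lemma: Ind Y tau ij smaller than t} is applicable with the above choice of $\mathcal A$ (which is a general sub-$\sigma$-algebra of $\G$, as required), and that measurability of $\Ind{\tau^i_j \leqslant t}$ with respect to $\H^{i,\leqslant j}_\infty$ justifies the final extraction of the indicator; both are immediate from the construction of the filtrations in Section \ref{sec: general framework}.
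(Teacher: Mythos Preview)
Your proposal is correct and follows exactly the approach of the paper: split via (\ref{eq: decomp Y}), apply (\ref{eq: representation 2}) of Proposition \ref{prop: ind Y} on $\{\tau^i_j>t\}$, apply Lemma \ref{lemma: Ind Y tau ij smaller than t} (with $\mathcal A=\H^{i,>j}_t\vee\F_t$, using Corollary \ref{cor: representation Hi}) on $\{\tau^i_j\leqslant t\}$, and then invoke Corollary \ref{cor:reduction G to H} for the second identity. The paper's own proof is terser but makes precisely the same moves; your version merely spells out the choice of $\mathcal A$ and the extraction of the indicator, which the paper leaves implicit.
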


\begin{proof}
	Since 
	\[
		\left. E \left[ Y  \right| {\H}^{i}_t \vee \F_t \right] = \left. E \left[ \Ind{\tau^i_j \leqslant t} Y  \right| {\H}^{i}_t \vee \F_t \right] + \left. E \left[ \Ind{\tau^i_j > t} Y  \right| {\H}^{i}_t \vee \F_t \right],
	\]
	the first part is a straightforward consequence of Proposition \ref{prop: ind Y} and Lemma \ref{lemma: Ind Y tau ij smaller than t}.
For the second part, it suffices to apply Corollary \ref{cor:reduction G to H}. 
\end{proof}

We now show some results which we will use to solve the reserve estimation problem in Section \ref{sec: pricing non-life}. For this purpose, our approach allows to obtain analytical formulas in a general setting in continuous time, where filtration dependence is taken into account. As we illustrate in Section \ref{sec: relation with compensator}, this is not possible in such a generality by using more classical approaches.
Let $0 \leqslant t \leqslant T < \infty$ and $Z := (Z_t)_{t \in[0,T]}$ be a continuous, bounded and $\Fb$-adapted process. For $i = 1,...,n$, we now consider
\begin{equation}\label{eq: special Y}
	Y = \sum^{\mathbf{N}^i_T}_{j = \mathbf{N}^i_t} X^i_j Z_{\tau^i_j} = \sum^{\infty}_{j = 1} \Ind{t <\tau^i_j \leqslant T} X^i_j Z_{\tau^i_j},
\end{equation}
and compute
\begin{equation}\label{eq: reserve1}
	E\left. \left[ Y \right| \G_t \right] = E\left. \left[ \sum^{\mathbf{N}^i_T}_{j = \mathbf{N}^i_t} X^i_j Z_{\tau^i_j} \right| \G_t \right].
\end{equation}
In particular, similarly to before, we study separately the two components of the decomposition of (\ref{eq: reserve1}) with respect to the first reporting time $\tau^i_1$, i.e.
\begin{equation}\label{eq: 2 components}
	E\left. \left[ \sum^{\mathbf{N}^i_T}_{j = \mathbf{N}^i_t} X^i_j Z_{\tau^i_j} \right| \G_t \right] =  E\left. \left[ \Ind{\tau^i_1 > t} \sum^{\mathbf{N}^i_T}_{j = \mathbf{N}^i_t} X^i_j Z_{\tau^i_j} \right| \G_t \right] + E\left. \left[ \Ind{\tau^i_1 \leqslant t} \sum^{\mathbf{N}^i_T}_{j = \mathbf{N}^i_t} X^i_j Z_{\tau^i_j} \right| \G_t \right],
\end{equation}
 and derive more explicit formulas in terms of the intensity process $\mu$, the distribution of delay $\theta^i$, and the distribution of development $N^i$ after the first reporting. We start with the $\Fb$-conditional expectation of $\tau^i_1$.

\begin{lemma}\label{lemma: cond exp tau1}
	For any $i = 1,...,n$ and $t \geqslant 0$, we have
	\begin{equation}\label{eq: cond exp tau1 with mu with tau1 smaller than t}
		P\left. \left( \tau^i_1 \leqslant t \right| \F_t \right) = \int_0^t G(t - s) e^{-\int^s_0 \mu_v \ud v} \mu_s \ud s,
	\end{equation}
	where $G$ is the cumulative distribution function of $\theta^i$ given in (\ref{eq: delay structure}).
\end{lemma}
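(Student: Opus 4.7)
The plan is to reduce everything to the $\F_t$-conditional law of the accident time $\tau^i_0$ on the interval $[0,t]$, after first integrating out the delay $\theta^i$ using the independence given in Assumption \ref{ass: homo and indep}.

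First, I would use the defining relation $\tau^i_1 = \tau^i_0 + \theta^i$ together with the fact that $\theta^i \geqslant 0$, so that $\{\tau^i_1 \leqslant t\} \subseteq \{\tau^i_0 \leqslant t\}$, and therefore
\[
P\!\left.\left(\tau^i_1 \leqslant t\right|\F_t\right) = E\!\left.\left[\Ind{\tau^i_0 \leqslant t}\Ind{\theta^i \leqslant t - \tau^i_0}\right|\F_t\right].
\]
Next, I would condition on $\F_t \vee \sigma(\tau^i_0)$ inside this expectation. Since by Assumption \ref{ass: homo and indep}(4) the delay $\theta^i$ is independent of $\F_\infty \vee \sigma(\tau^i_0)$, its conditional distribution given $\F_t \vee \sigma(\tau^i_0)$ coincides with its unconditional law $G$, so
\[
E\!\left.\left[\Ind{\theta^i \leqslant t - \tau^i_0}\right|\F_t \vee \sigma(\tau^i_0)\right] = G(t - \tau^i_0)\quad\text{on }\{\tau^i_0 \leqslant t\}.
\]
Taking the tower property with respect to $\F_t$ then yields
\[
P\!\left.\left(\tau^i_1 \leqslant t\right|\F_t\right) = E\!\left.\left[\Ind{\tau^i_0 \leqslant t} G(t - \tau^i_0)\right|\F_t\right].
\]

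The second step is to identify the $\F_t$-conditional law of $\tau^i_0$ restricted to $[0,t]$. By (\ref{eq: cond exp tau 0}) and (\ref{eq: def mu}), for every $s \in [0,t]$,
\[
P\!\left.\left(\tau^i_0 > s\right|\F_t\right) = P\!\left.\left(\tau^i_0 > s\right|\F_s\right) = e^{-\int_0^s \mu_v\,\ud v}.
\]
Since $\mu$ is locally integrable and $\Fb$-progressively measurable, differentiating in $s$ on $[0,t]$ shows that the $\F_t$-conditional distribution of $\tau^i_0$ on the set $\{\tau^i_0 \leqslant t\}$ admits the density $s \mapsto \mu_s e^{-\int_0^s \mu_v\,\ud v}$ with respect to Lebesgue measure. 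Consequently, for any bounded Borel function $h$ on $[0,t]$,
\[
E\!\left.\left[\Ind{\tau^i_0 \leqslant t} h(\tau^i_0)\right|\F_t\right] = \int_0^t h(s)\, \mu_s e^{-\int_0^s \mu_v\,\ud v}\,\ud s.
\]

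Finally, applying this with $h(s) = G(t - s)$ (which is bounded in $[0,1]$ by Assumption \ref{ass: delay}) gives exactly
\[
P\!\left.\left(\tau^i_1 \leqslant t\right|\F_t\right) = \int_0^t G(t - s)\, e^{-\int_0^s \mu_v\,\ud v}\, \mu_s\,\ud s,
\]
which is (\ref{eq: cond exp tau1 with mu with tau1 smaller than t}). The only delicate point is the density-identification step: one must justify that the right-hand-side integral represents a bona fide $\F_t$-conditional expectation, which is standard once one verifies it using a monotone class / $\pi$-$\lambda$ argument on indicators $h = \Ind{[0,s]}$ with $s \leqslant t$, where the identity follows directly from $P(\tau^i_0 \leqslant s \mid \F_t) = 1 - e^{-\int_0^s \mu_v\,\ud v}$.
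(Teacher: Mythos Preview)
Your proof is correct and follows essentially the same route as the paper's: first integrate out $\theta^i$ using the independence in Assumption \ref{ass: homo and indep}(4) and the tower property to reduce to $E[\Ind{\tau^i_0 \leqslant t}G(t-\tau^i_0)\mid \F_t]$, and then evaluate this via the $\F_t$-conditional density of $\tau^i_0$ on $[0,t]$. The only difference is cosmetic: where you spell out the density identification and the monotone-class justification, the paper simply invokes Proposition 5.1.1 of \cite{Bie-Rut} together with relation (\ref{eq: cond exp tau 0}) and the continuity of $G$ on $[0,\infty)$ coming from Assumption \ref{ass: delay}.
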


\begin{proof}
	Note that by Assumption \ref{ass: homo and indep}, $\theta^i$ is independent of $\F_t \vee \sigma(\tau^i_0)$. Furthermore, both $\theta^i$ and $\tau^i_0$ are $P$-a.s. nonnegative. Therefore, we have
	\begin{align*}
		P\left. \left( \tau^i_1 \leqslant t \right| \F_t \right) &= E\left. \left[ \Ind{\tau^i_0 + \theta^i \leqslant t} \right| \F_t \right] \\
		&= E\left. \left[ E\left. \left[ \Ind{\tau^i_0 \leqslant t}\Ind{\tau^i_0 + \theta^i \leqslant t} \right| \F_t \vee \sigma(\tau^i_0) \right] \right| \F_t \right]\\
		&= E\left. \left[ \Ind{\tau^i_0 \leqslant t} \left. E\left[ \Ind{\theta^i \leqslant t -x} \right]\right|_{x = \tau^i_0} \right| \F_t \right] \\
		&= E\left. \left[  \Ind{\tau^i_0 \leqslant t}  {G}(t - \tau^i_0) \right| \F_t \right].
	\end{align*}
	To conclude we only need to show
	\begin{equation}\label{eq: final step tau1}
		E\left. \left[  \Ind{\tau^i_0 \leqslant t} {G}(t - \tau^i_0) \right| \F_t \right] = \int_0^t {G}(t - s) e^{-\int^s_0 \mu_v \ud v} \mu_s \ud s.
	\end{equation}
	This can be done in the same way as for Proposition 5.1.1 of \cite{Bie-Rut}, in view of relation (\ref{eq: cond exp tau 0}) and the fact that $G$ is continuous according to Assumption \ref{ass: delay}. Firstly, relation (\ref{eq: final step tau1}) can be shown for piecewise constant function $G$ and then it is obtained as a limit for continuous $G$.
\end{proof}

\begin{rem}
	Note that from (\ref{eq: final step tau1}) we can derive the conditional probability that the accident event has been incurred, but not yet reported (IBNR events in the terminology used in the insurance sector).
\end{rem}
	
\noindent In expression (\ref{eq: cond exp tau1 with mu with tau1 smaller than t}) of Lemma \ref{lemma: cond exp tau1}, the parameter $t$ appears also in the integrand. The following corollary improves relation (\ref{eq: cond exp tau1 with mu with tau1 smaller than t}) and shows that the process of conditional expectation $(P\left. \left( \tau^i_1 \leqslant t \right| \F_t \right))_{t \geqslant 0}$ is absolutely continuous with respect to the Lebesgue measure.

\begin{cor}\label{cor: differentiable}
	For any $i=1,...,n$, we have
	\begin{equation}\label{eq: diff P tau 1}
		P\left. \left( \tau^i_1 \leqslant t \right| \F_t \right) = \int_0^t \left( \alpha_0 e^{-\int^s_0 \mu_v \ud v} \mu_s +  \int_0^s g(s - u) e^{-\int^u_0 \mu_v \ud v} \mu_u \ud u \right)\ud s,	
	\end{equation}
	where $\alpha_0$ and $g$ are defined in (\ref{eq: delay structure}).
\end{cor}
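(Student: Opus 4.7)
The plan is to start from identity (\ref{eq: cond exp tau1 with mu with tau1 smaller than t}) of Lemma \ref{lemma: cond exp tau1}, expand the cumulative distribution function via Assumption \ref{ass: delay}, and then rearrange by Fubini so that the Lebesgue density with respect to $t$ becomes explicit. Concretely, substituting $G(t-s) = \alpha_0 + \int_0^{t-s} g(y)\,\ud y$ into (\ref{eq: cond exp tau1 with mu with tau1 smaller than t}) gives a sum of two integrals; the first is already of the form $\int_0^t \alpha_0 e^{-\int_0^s \mu_v \ud v}\mu_s\,\ud s$, which matches the $\alpha_0$-term in (\ref{eq: diff P tau 1}). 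The task therefore reduces to rewriting the second integral $\int_0^t \bigl(\int_0^{t-s} g(y)\,\ud y\bigr) e^{-\int_0^s \mu_v \ud v}\mu_s\,\ud s$ in the required cumulative-density form.

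Next, I would perform the change of variables $r = s + y$ in the inner integral, so that $y \in [0,t-s]$ corresponds to $r \in [s,t]$ and $g(y)$ becomes $g(r-s)$. Applying Fubini, which is justified since the integrand is nonnegative and $G \leqslant 1$ keeps the total mass finite, the order of integration over the triangle $\{(s,r) : 0 \leqslant s \leqslant r \leqslant t\}$ can be swapped. After the swap, the outer variable is $r$ ranging over $[0,t]$ and the inner variable is $s$ ranging over $[0,r]$. Renaming $r \to s$ and the inner dummy $s \to u$ yields the second summand $\int_0^t \int_0^s g(s-u) e^{-\int_0^u \mu_v \ud v} \mu_u\,\ud u\,\ud s$ in (\ref{eq: diff P tau 1}), and collecting the two integrals under a single $\ud s$ produces the stated formula.

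There is no genuine obstacle here; the only items to check carefully are the affine change of variables in the inner $y$-integral and the applicability of Fubini, both of which are routine given nonnegativity of $g$, $\mu$, and the exponential factor. The resulting representation as $\int_0^t (\cdots)\,\ud s$ immediately yields the claimed absolute continuity of $t \mapsto P(\tau^i_1 \leqslant t \mid \F_t)$ with respect to Lebesgue measure, which is the main qualitative content of the corollary beyond the formula itself.
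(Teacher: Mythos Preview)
Your proposal is correct and is precisely the computation the paper has in mind: the paper's proof simply states that the result ``follows immediately from Assumption~\ref{ass: delay} and relation~(\ref{eq: cond exp tau1 with mu with tau1 smaller than t})'', and your substitution of $G$, change of variable $r=s+y$, and Fubini swap are exactly the routine steps that make this immediate.
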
	

\begin{proof}
	This follows immediately from Assumption \ref{ass: delay} and relation (\ref{eq: cond exp tau1 with mu with tau1 smaller than t}). 	
\end{proof}

\noindent Note that in the following lemma we do not assume that $Z$ is $\Fb$-adapted and the boundedness condition can be generalized.

\begin{lemma}\label{lemma: transf into integral}
	If the process $Z := (Z_u)_{u \in [t, T]}$ is left-continuous and bounded and $Z_t$ is $\F_T$-measurable for all $t \geqslant 0$, then we have
	\begin{align*}
		E\left. \left[ \Ind{t < \tau^i_1 \leqslant T}  Z_{\tau^i_1}  \right| \F_t \right] =  E\left. \left[ \int_t^T  Z_u \ud P\left. \left( \tau^i_1 \leqslant u \right| \F_u \right) \right| \F_t \right],
	\end{align*}
for $i = 1,...,n$ and $t \in [0,T]$. 
\end{lemma}

\begin{proof}
	The argument is similar to Proposition 5.1.1 of \cite{Bie-Rut}, which we outline for completeness. We assume first that both $Z$ and $\tilde{Z}$ are stepwise constant, i.e. without loss of generality, 
	\[
		Z_u = \sum_{j = 0}^n Z_{t_j} \Ind{t_j < u \leqslant t_{j+1}}, \ \ \  \ \tilde{Z}_u = \sum_{j = 0}^n \tilde{Z}_{t_j} \Ind{t_j < u \leqslant t_{j+1}}, 
	\]
	for $t < u \leqslant T$, where $t_0 = t < ... < t_{j + 1} = T$, $Z_{t_j}$ is $\F_{T}$-measurable and $\tilde{Z}_{t_j}$ is independent from $\F_T \vee \sigma(\tau^i_1)$ for all $j = 0,...,n$. By Lemma \ref{lemma: H as F indep}, it holds that 
	 \begin{align}
	 	&E\left. \left[ \Ind{t < \tau^i_1 \leqslant T} \tilde{Z}_{\tau^i_1}  Z_{\tau^i_1}  \right| \F_t \right] \nonumber \\
	 	=& E\left. \left[ \sum_{j = 0}^n E\left. \left[ \tilde{Z}_{t_j} Z_{t_j} \Ind{t_j < \tau^i_1 \leqslant t_{j+1}} \right| \F_{T } \right] \right| \F_t \right] \nonumber\\
	 	=& E\left. \left[ \sum_{j = 0}^n E[\tilde{Z}_{t_j}] Z_{t_j} \left( E\left. \left[ \Ind{\tau^i_1 \leqslant t_{j+1}} \right| \F_{t_{j + 1}} \right] -  E\left. \left[ \Ind{\tau^i_1 \leqslant t_{j}} \right| \F_{t_{j}} \right] \right) \right| \F_t \right] \label{eq: riemann sum gen}\\
	 	=& E\left. \left[ \int_t^T E[\tilde{Z}_u] Z_u \ud P\left. \left( {\tau^i_1 \leqslant u} \right| \F_{u} \right) \right| \F_t \right]. \label{eq: lebesq stielj integral}
	 \end{align}
In the general case, it is sufficient to find stepwise constant approximations for $Z$ and $\tilde{Z}$. Since $Z$ is bounded and $E[\tilde{Z}]$ is continuous and bounded on $[t,T]$, the Riemann sum under the sign of conditional expectation in (\ref{eq: riemann sum gen}) converges to the Lebesgue–Stieltjes integral in expression (\ref{eq: lebesq stielj integral}), hence the convergence of the conditional expectations follows as well by the dominated convergence theorem.
\end{proof}

Now we are able to calculate the first component on the right-hand side of (\ref{eq: 2 components}).
	We define
	\begin{equation}\label{eq: tilde m}
	\begin{array}{*3{>{\displaystyle}l}}
		\tilde m(t) := E  \left[  \sum^{\tilde{\mathbf{N}}_t}_{j = 1} \tilde{X}^i_j  \right], \ \ \ \ &\text{if } t \geqslant 0,\\
		\tilde m(t) := 0, \ \ \ \ &\text{if }t < 0,
	\end{array}	
	\end{equation}
	where $\tilde{\mathbf{N}}$ denotes the ground process of $(\tilde{\tau}^i_j, \tilde{X}^i_j)_{j \in \N_0}$, i.e.
	\begin{equation}\label{eq: ground process tilde}
		\tilde{\mathbf{N}}_t := \sum^\infty_{j=1} \Ind{\tilde{\tau}^i_j \leqslant t}, \ \ \ \ t \geqslant 0.
	\end{equation}
	Note that $\tilde m$ does not depend on $i$ because of Assumption \ref{ass: homo and indep} (2).

\noindent The following the result hold also under different integrability and measurability conditions.

\begin{prop}\label{prop: Y with tau1 bigger than t}
	Let $Z := (Z_t)_{t \in [0,T]}$ be a continuous, bounded and $\Fb$-adapted process and  $Y$ be as in (\ref{eq: special Y}), then for any $t \in [0,T]$,
	\begin{align*}
		&E\left. \left[ \Ind{\tau^i_1 > t}  Y \right| \H^{i}_t \vee \F_t \right]\\
		&=  \Ind{\tau^i_1 > t} \frac{E\left. \left[ \int_t^T \left(E[X^i_1]  Z_u + \int_u^T Z_v  \ud \tilde{m}(v - u)\right) \ud P\left. \left( {\tau^i_1 \leqslant u} \right| \F_{u} \right) \right| \F_t \right]}{P\left. \left( \tau^i_1 > t \right| \F_t \right)}
	\end{align*}
	where $\tilde m$ is defined in (\ref{eq: tilde m}).
\end{prop}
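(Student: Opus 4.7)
The plan is to first invoke Proposition \ref{prop: ind Y} with $j=1$, using the convention $\H^{i,<1}_t = \{\emptyset, \Omega\}$, which reduces the left-hand side to
\[
	\Ind{\tau^i_1 > t}\,\frac{E\left[\Ind{\tau^i_1 > t}\, Y \,\big|\, \F_t\right]}{P(\tau^i_1 > t \mid \F_t)};
\]
only the numerator needs to be computed. On $\{\tau^i_1 > t\}$ the strict ordering \eqref{eq: simple} forces $\tau^i_j > t$ for all $j \geqslant 1$, and inserting \eqref{eq: specif MP 2} I split
\[
	\Ind{\tau^i_1 > t}\, Y = \Ind{t < \tau^i_1 \leqslant T} X^i_1 Z_{\tau^i_1} + \Ind{\tau^i_1 > t}\sum_{j=1}^\infty \Ind{\tau^i_1 + \tilde\tau^i_j \leqslant T}\, \tilde X^i_j\, Z_{\tau^i_1 + \tilde\tau^i_j}
\]
and treat the two summands separately.

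For the first summand I observe that Assumption \ref{ass: homo and indep} parts (3) and (4) combine to give $X^i_1 \perp \F_\infty \vee \sigma(\tau^i_1)$: indeed, (4) eliminates $\theta^i$ from any $\F_\infty \vee \sigma(\tau^i_0, X^i_1)$-expectation, and (3) then detaches $X^i_1$ from $\F_\infty \vee \sigma(\tau^i_0)$. Conditioning first on $\F_\infty \vee \sigma(\tau^i_1)$ factors out $E[X^i_1]$, and Lemma \ref{lemma: transf into integral} applied to $Z$ yields $E[X^i_1]\, E\bigl[\int_t^T Z_u\, dP(\tau^i_1 \leqslant u \mid \F_u) \bigm| \F_t\bigr]$. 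For the second summand, Assumption \ref{ass: homo and indep} (5) makes the family $(\tilde\tau^i_j, \tilde X^i_j)_{j \in \N_+}$ independent of $\F_T \vee \sigma(\tau^i_1)$; conditioning on that $\sigma$-algebra and applying a Fubini argument justified by \eqref{eq: finite moment} and the boundedness of $Z$, I can replace the sum by the deterministic mean defining $\tilde m$ in \eqref{eq: tilde m}, obtaining
\[
	E\left[\sum_{j=1}^\infty \Ind{\tau^i_1 + \tilde\tau^i_j \leqslant T}\, \tilde X^i_j\, Z_{\tau^i_1 + \tilde\tau^i_j}\,\Big|\, \F_T \vee \sigma(\tau^i_1)\right] = \Ind{\tau^i_1 \leqslant T}\int_{\tau^i_1}^T Z_v\, d\tilde m(v - \tau^i_1).
\]
Setting $\tilde Z_u := \int_u^T Z_v\, d\tilde m(v-u)$ and applying Lemma \ref{lemma: transf into integral} a second time to $\tilde Z$ gives $E\bigl[\int_t^T \tilde Z_u\, dP(\tau^i_1 \leqslant u \mid \F_u) \bigm| \F_t\bigr]$. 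Linearity then merges both contributions into the single outer $u$-integrand $E[X^i_1]Z_u + \int_u^T Z_v\, d\tilde m(v-u)$, producing the announced formula.

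The main obstacle will be verifying that the auxiliary process $\tilde Z$ satisfies the hypotheses of Lemma \ref{lemma: transf into integral}. Boundedness follows from $|\tilde Z_u| \leqslant \|Z\|_\infty \tilde m(T) < \infty$, where $\tilde m(T) < \infty$ is exactly \eqref{eq: finite moment}, and $\F_T$-measurability for each fixed $u$ is inherited from the $\Fb$-adaptedness of $Z$. Left-continuity in $u$ is more delicate: after the change of variable $w = v-u$, one has $\tilde Z_u = \int_0^{T-u} Z_{w+u}\, d\tilde m(w)$, and the increment $\tilde Z_u - \tilde Z_{u_0}$ as $u \uparrow u_0$ splits into a boundary piece over $(T-u_0, T-u]$, controlled by the right-continuity of the finite measure $d\tilde m$, and an interior piece over $[0, T-u_0]$ handled by dominated convergence using continuity of $Z$.
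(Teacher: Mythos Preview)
Your proposal is correct and follows the same global architecture as the paper's proof: reduce via Proposition~\ref{prop: ind Y} with $j=1$, split $Y$ into the $j=1$ contribution and the $j\geqslant 2$ tail, rewrite the tail as an integral against $\tilde m$, and then feed the resulting $\tilde Z$ through Lemma~\ref{lemma: transf into integral}. The difference lies in how the tail is handled. The paper first passes through representation~\eqref{eq: representation 2} to the intermediate $\sigma$-algebra $\H^{i,1}_t\vee\F_t$, then works with stepwise $\Fb$-predictable approximations of $Z$, computes the resulting Riemann sums against $\tilde m$ explicitly, and only afterwards takes a limit and re-applies \eqref{eq: representation} to descend to $\F_t$. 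You instead jump directly to $\F_t$ via \eqref{eq: representation}, condition once on $\F_T\vee\sigma(\tau^i_1)$, and invoke the mean-measure identity $E\bigl[\sum_j f(\tilde\tau^i_j)\tilde X^i_j \Ind{\tilde\tau^i_j\leqslant s}\bigr]=\int_0^s f\,d\tilde m$ together with the freezing lemma to obtain the $\tilde m$-integral in one stroke; this is shorter and avoids the approximation machinery, at the cost of relying on a measure-theoretic fact (the mean-measure identity for general bounded $f$) that the paper's stepwise computation establishes from scratch. Your left-continuity argument for $\tilde Z$ via the change of variable $w=v-u$ and dominated convergence is likewise more elementary than the paper's route through weak convergence of the shifted measures $d_{s}(u)=\tilde m(u-s)$, though both are valid and rest on the same two ingredients: right-continuity of $\tilde m$ and continuity of $Z$.
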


\begin{proof}
	By applying (\ref{eq: representation 2}) in Proposition \ref{prop: ind Y} to $Y$ defined in (\ref{eq: special Y}), we get
	\begin{align}
		&E\left. \left[ \Ind{\tau^i_1 > t}  Y \right| \H^{i}_t \vee \F_t \right] =\Ind{\tau^i_1 > t} E\left. \left[ Y \right| \H^{i,1}_t \vee \F_t \right] \nonumber \\
		=& \Ind{\tau^i_1 > t} E\left. \left[ \sum^{\infty}_{j = 1} \Ind{\tau^i_j \leqslant T}X^i_j Z_{\tau^i_j} \right| \H^{i,1}_t \vee \F_t \right]\nonumber \\
		=& \Ind{\tau^i_1 > t} E\left. \left[ \Ind{\tau^i_1 \leqslant T}X^i_1 Z_{\tau^i_1}  \right| \H^{i,1}_t \vee \F_t \right] + \Ind{\tau^i_1 > t} E \left. \left[ \sum^{\infty}_{j = 2} \Ind{\tau^i_j \leqslant T}X^i_j Z_{\tau^i_j}  \right| \H^{i,1}_t \vee \F_t \right]. \label{eq: right hand side}	
	\end{align}	
	For the first component of (\ref{eq: right hand side}), it is sufficient to use (\ref{eq: representation}) in Proposition \ref{prop: ind Y} and an argument similar to Proposition 5.1.1 of \cite{Bie-Rut}, taking into account the independence condition in Assumption \ref{ass: homo and indep} (3) and Lemma \ref{lemma: H as F indep}. We have hence
	\begin{align*}
	 	 &\Ind{\tau^i_1 > t} E\left. \left[ \Ind{\tau^i_1 \leqslant T}X^i_1 Z_{\tau^i_1}  \right| \H^{i,1}_t \vee \F_t \right]\\
	 	 =& E\left. \left[ \Ind{t < \tau^i_1 \leqslant T}X^i_1 Z_{\tau^i_1}  \right| \H^{i,1}_t \vee \F_t \right]\\
	 	 =&  \Ind{\tau^i_1 > t} \frac{E\left. \left[ \Ind{t < \tau^i_1 \leqslant T}X^i_1 Z_{\tau^i_1}  \right| \F_t \right]}{P\left. \left( \tau^i_1 > t \right| \F_t \right)}\\
	 	=& \Ind{\tau^i_1 > t} \frac{E\left. \left[ \int_t^T E[X^i_1] Z_u \ud P\left. \left( {\tau^i_1 \leqslant u} \right| \F_{u} \right) \right| \F_t \right]}{P\left. \left( \tau^i_1 > t \right| \F_t \right)}.
	 \end{align*}
\noindent Now we focus on the second component of (\ref{eq: right hand side}). We assume first that restricted on the interval $[t,T]$, $Z$ is a bounded, stepwise, $\Fb$-predictable process, i.e. 
	\begin{equation}\label{eq: stepwise Z}
		Z_u = \sum_{k = 0}^n Z_{t_k} \Ind{t_k < u \leqslant t_{k+1}}, 
	\end{equation}
	for $t < u \leqslant T$, where $t_0 = t < ... < t_{n + 1} = T$ and $Z_{t_k}$ is $\F_{t_k}$-measurable for all $k = 0,...,n$.
	In such case, we have
	\begin{align}
		&\Ind{\tau^i_1 > t} E \left. \left[ \sum^{\infty}_{j = 2} \Ind{\tau^i_j \leqslant T}X^i_j Z_{\tau^i_j}  \right| \H^{i,1}_t \vee \F_t \right] \nonumber\\
		=& \Ind{\tau^i_1 > t} E \left. \left[ \sum^{\infty}_{j = 1} \Ind{t < \tau^i_1 + \tilde{\tau}^i_j \leqslant T} \tilde{X}^i_j Z_{\tau^i_j}  \right| \H^{i,1}_t \vee  \F_t \right] \nonumber \\
		=& \Ind{\tau^i_1 > t} E \left. \left[ \sum_{k=0}^n \sum^{\infty}_{j = 1} \Ind{t_k < \tau^i_1 + \tilde{\tau}^i_j \leqslant t_{k + 1}} \tilde{X}^i_j Z_{t_k}  \right| \H^{i,1}_t \vee  \F_t \right] \nonumber\\
		=& \Ind{\tau^i_1 > t} E \left. \left[ \sum_{k=0}^n Z_{t_k}  \left. E \left[  \sum^{\infty}_{j = 1} \Ind{t_k < \tau^i_1 + \tilde{\tau}^i_j \leqslant t_{k + 1}} \tilde{X}^i_j \right| \H^{i,1}_t \vee  \F_{t_k} \vee \sigma(\tau^i_1) \right] \right| \H^{i,1}_t \vee  \F_t \right] \nonumber \\
		=& \Ind{\tau^i_1 > t} E \left. \left[ \sum_{k=0}^n Z_{t_k} \left. E \left[  \sum^{\infty}_{j = 1} \Ind{t_k < x + \tilde{\tau}^i_j \leqslant t_{k + 1}} \tilde{X}^i_j  \right]\right|_{x = \tau^i_1} \right| \H^{i,1}_t \vee  \F_t \right] \nonumber \\
		=& \Ind{\tau^i_1 > t} E \left. \left[ \sum_{k=0}^n Z_{t_k} \left( \tilde{m}(t_{k + 1} - \tau^i_1) - \tilde{m}(t_{k} - \tau^i_1) \right) \right| \H^{i,1}_t \vee  \F_t \right], \label{eq: riemann sum 2}
	\end{align}
	where in the second last equality we use the independence between the marked point process $(\tilde{\tau}^i_j, \tilde{X}^i_j)_{j \in \N_0}$ and the $\sigma$-algebra $\H^{i,1}_\infty \vee \F_\infty$ in Assumption \ref{ass: homo and indep}. This shows that for any bounded, stepwise, $\Fb$-predictable process $Z$, we have
	\[
		\Ind{\tau^i_1 > t} E \left. \left[ \sum^{\infty}_{j = 2} \Ind{\tau^i_j \leqslant T}X^i_j Z_{\tau^i_j}  \right| \H^{i,1}_t \vee \F_t \right] = \Ind{\tau^i_1 > t} E \left. \left[ \int_t^T Z_u \ud \tilde{m}(u - \tau^i_1) \right| \H^{i,1}_t \vee  \F_t \right].
	\]
	A continuous bounded process $Z$ can be approximated by a sequence of bounded, stepwise and $\Fb$-predictable processes, i.e. there is a sequence $Z^n$ of the form (\ref{eq: stepwise Z}) such that
	\[
		Z^n \longrightarrow Z \ \ \ \ \text{and}  \ \ \ \ |Z^n| \leqslant M,
	\]	
	with $M > 0$. 
	Since $\tilde m$ is right-continuous and monotone, the Lebesgue-Stieltjes integral
	\begin{equation}\label{eq: l-s integral}
		\int_t^T Z_u \ud \tilde{m}(u - \tau^i_1)
	\end{equation}
	is well defined. It holds by Lebesgue Theorem  
	\[	
		\int_t^T Z^n_u \ud \tilde{m}(u - \tau^i_1) \longrightarrow \int_t^T Z_u \ud \tilde{m}(u - \tau^i_1).
	\]
	Furthermore,
	\begin{equation}\label{eq: lebesgue conv}
		\left|\int_t^T Z^n_u \ud \tilde{m}(u - \tau^i_1) \right| \leqslant M \left| \int_t^T \ud \tilde{m}(u - \tau^i_1)\right| = M |\tilde{m}(T - \tau^i_1) - \tilde{m}(t - \tau^i_1)|.
	\end{equation}		
	The right-hand side of (\ref{eq: lebesgue conv}) is uniformly bounded by (\ref{eq: tilde m}) and (\ref{eq: finite moment}). By applying again Lebesgue Theorem, we have also the convergence of the conditional expectations
	\[
		\Ind{\tau^i_1 > t} E \left. \left[ \int_t^T Z^n_u \ud \tilde{m}(u - \tau^i_1) \right| \H^{i,1}_t \vee  \F_t \right] \longrightarrow E \left. \left[ \int_t^T Z_u \ud \tilde{m}(u - \tau^i_1) \right| \H^{i,1}_t \vee  \F_t \right].
	\]
	We note that $\tilde m(u) = 0 $ for $u < 0$, hence,
	\begin{align*}
		&\Ind{\tau^i_1 > t} E \left. \left[ \sum^{\infty}_{j = 2} \Ind{\tau^i_j \leqslant T}X^i_j Z_{\tau^i_j}  \right| \H^{i,1}_t \vee \F_t \right]\\
		&= \Ind{\tau^i_1 > t} E \left. \left[ \int_t^T Z_u \ud \tilde{m}(u - \tau^i_1) \right| \H^{i,1}_t \vee  \F_t \right]\\
		&= E \left. \left[ \Ind{t < \tau^i_1 \leqslant T} \int_t^T Z_u \ud \tilde{m}(u - \tau^i_1) \right| \H^{i,1}_t \vee  \F_t \right].
	\end{align*}
	By applying again (\ref{eq: representation}) in Proposition \ref{prop: ind Y} to the above expression, we get
 	\begin{align*}
		&\Ind{\tau^i_1 > t} E \left. \left[ \sum^{\infty}_{j = 2} \Ind{\tau^i_j \leqslant T}X^i_j Z_{\tau^i_j}  \right| \H^{i,1}_t \vee \F_t \right]\\
		&= \Ind{\tau^i_1 > t}\frac{  E \left. \left[ \Ind{t < \tau^i_1 \leqslant T}  \int_t^T Z_u \ud \tilde{m}(u - \tau^i_1) \right| \F_t \right]}{P\left. \left( \tau^i_1 > t \right| \F_t \right)}.
 	\end{align*}
 	Let $\tilde Z_s := \int_t^T Z_u \ud \tilde{m}(u - s)$, $s \in [0,T]$. We note that $\tilde m$ is right-continuous and monotone. On one hand, for fixed $s \in [0,T]$, the function $d_s(u) := \tilde m (u - s)$, $u \in [0,T]$, is also right-continuous and monotone and defines the cumulative distribution function of a finite positive measure, in view of (\ref{eq: finite moment}). On the other hand, for fixed $u \in [0,T]$, the function $\tilde m (u - s)$, $s \in [0,T]$, is left-continuous in $s$, i.e. for every series $s_n \nearrow s$, we have the pointwise convergence 
 	\[
		\lim_{s_n \nearrow s} d_{s_n}(u) = d_s(u) \ \ \ \ \text{for all } u \in [0,T]
 	\]
 	of the cumulative distribution functions, equivalent to the convergence in distribution or weak convergence in measure. Note that a series of positive finite measures $(\nu_n)_{n \in \N}$ converges weakly to a positive finite measure $\nu$, if for all bounded continuous functions $f$ the following holds \[ \int f \ud \nu_n \longrightarrow \int f \ud \nu.  \]This yields the convergence
 	\[
 		\tilde Z_{s_n} \longrightarrow \tilde Z_s, \ \ \ \ P-\text{a.s.},
 	\] 
 	that is, $\tilde Z_s := \int_t^T Z_u \ud \tilde{m}(u - s)$, $s \in [0,T]$, is left-continuous. Furthermore, it is also bounded. Now we apply Lemma \ref{lemma: transf into integral} and obtain
 	\begin{align*}
		&\Ind{\tau^i_1 > t}\frac{  E \left. \left[ \Ind{t < \tau^i_1 \leqslant T}  \int_t^T Z_u \ud \tilde{m}(u - \tau^i_1) \right| \F_t \right]}{P\left. \left( \tau^i_1 > t \right| \F_t \right)}\\
		&=\Ind{\tau^i_1 > t}\frac{  E \left. \left[ \Ind{t < \tau^i_1 \leqslant T}  \tilde Z_{\tau^i_1} \right| \F_t \right]}{P\left. \left( \tau^i_1 > t \right| \F_t \right)}\\
		 &= \Ind{\tau^i_1 > t} \frac{E\left. \left[ \int_t^T \tilde Z_u \ud P\left. \left( {\tau^i_1 \leqslant u} \right| \F_{u} \right) \right| \F_t \right]}{P\left. \left( \tau^i_1 > t \right| \F_t \right)}\\
		 &= \Ind{\tau^i_1 > t} \frac{E\left. \left[ \int_t^T \left(\int_t^T Z_v  \ud \tilde{m}(v - u) \right) \ud P\left. \left( {\tau^i_1 \leqslant u} \right| \F_{u} \right) \right| \F_t \right]}{P\left. \left( \tau^i_1 > t \right| \F_t \right)}.
 	\end{align*}	
 	As the last step, we note that for $u < s$, $\int_t^T Z_u \ud \tilde{m}(u - s) = \int_s^T Z_u \ud \tilde{m}(u - s) $ since $\tilde{m}(u - s) = 0$. This concludes the proof.
\end{proof}

\begin{rem}
	The proof of Proposition \ref{prop: Y with tau1 bigger than t} relies on Assumption \ref{ass: homo and indep}.
	Another sufficient condition would be the continuity of $\tilde m$, such as in the case of a compound Poisson process or a Cox process with continuous intensity process and integrable marks. Indeed, since $\tilde m (u) = 0$ for $u < 0$,
	\begin{align*}
		\Ind{\tau^i_1 > t} \left|\int_t^T Z^n_u \ud \tilde{m}(u - \tau^i_1) \right| &\leqslant \Ind{\tau^i_1 > t} M \left| \int_t^T \ud \tilde{m}(u - \tau^i_1)\right|\\
		&= \Ind{t < \tau^i_1 \leqslant T} M |\tilde{m}(T - \tau^i_1) - \tilde{m}(t - \tau^i_1)|,	
	\end{align*}
	and the right-hand side is uniformly bounded if $\tilde m$ is continuous.
\end{rem}

The following proposition gives a representation of the second component on the right-hand side of (\ref{eq: 2 components}).

\begin{prop}\label{prop: Y with tau 1 smaller than t} 
	Under the same assumptions of Proposition \ref{prop: Y with tau1 bigger than t}, if for each $i= 1,...,n$, the process $\left(\sum^{\tilde{\mathbf{N}}_t}_{j = 1} \tilde{X}^i_j\right)_{t \in[0,T]} $, where $\tilde{\mathbf{N}}$ is defined in (\ref{eq: ground process tilde}),
is of  independent increments with respect to its natural filtration $\tilde{\Hb}^i$, then for $t \in [0,T]$ and $Y$ as in (\ref{eq: special Y}),
	it holds
	\[
		\left. E \left[ \Ind{\tau^i_1 \leqslant t} Y  \right| \H^{i}_t \vee \F_t \right] =\Ind{\tau^i_1 \leqslant t} \left. \left. E \left[ \int_t^T  Z_{u} \ud \tilde{m}({u - x})  \right| \H^{i,1}_\infty \vee \tilde{\H}^{i}_{t - x} \vee \F_t \right] \right|_{x = \tau^i_1},
	\]
	for $i=1,...,n$.
\end{prop}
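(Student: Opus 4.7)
The plan is to mirror the structure of the proof of Proposition \ref{prop: Y with tau1 bigger than t}, combined with Lemma \ref{lemma: Ind Y tau ij smaller than t}. First, on $\{\tau^i_1 \leqslant t\}$ the term $j=1$ in $Y$ vanishes, so using $\tau^i_{j+1} = \tau^i_1 + \tilde{\tau}^i_j$ and $X^i_{j+1} = \tilde{X}^i_j$ from (\ref{eq: specif MP 2}) I rewrite
\[
\Ind{\tau^i_1 \leqslant t} Y = \Ind{\tau^i_1 \leqslant t} \sum_{j=1}^\infty \Ind{t < \tau^i_1 + \tilde{\tau}^i_j \leqslant T} \tilde{X}^i_j\, Z_{\tau^i_1 + \tilde{\tau}^i_j}.
\]

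Second, I apply Lemma \ref{lemma: Ind Y tau ij smaller than t} with $j=1$ and $\mathcal{A} = \H^{i,>1}_t \vee \F_t$, together with Corollary \ref{cor: representation Hi}, to replace the conditioning $\sigma$-algebra $\H^i_t \vee \F_t$ by $\H^{i,1}_\infty \vee \H^{i,>1}_t \vee \F_t$. Since $\tau^i_{j+1} = \tau^i_1 + \tilde{\tau}^i_j$, on $\{\tau^i_1 = x \leqslant t\}$ the $\sigma$-algebra $\H^{i,>1}_t$ is obtained from $\tilde{\H}^i_{t-x}$ by a deterministic time shift, so the standard conditioning-by-substitution argument will let me freeze $\tau^i_1 = x$ and work with $\tilde{\H}^i_{t-x} \vee \F_t$ afterwards.

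Third, I approximate $Z$ by stepwise $\Fb$-predictable processes $Z^{(n)}_u = \sum_k Z_{t_k} \Ind{t_k < u \leqslant t_{k+1}}$ with $t = t_0 < \cdots < t_{n+1} = T$, as in the proof of Proposition \ref{prop: Y with tau1 bigger than t}. Writing $\tilde{S}_u := \sum_{j \geqslant 1} \Ind{\tilde{\tau}^i_j \leqslant u}\, \tilde{X}^i_j$, the inner sum telescopes as
\[
\sum_{j=1}^\infty \Ind{t < x + \tilde{\tau}^i_j \leqslant T} \tilde{X}^i_j\, Z^{(n)}_{x + \tilde{\tau}^i_j} = \sum_k Z_{t_k}\bigl(\tilde{S}_{t_{k+1}-x} - \tilde{S}_{t_k-x}\bigr).
\]
Taking conditional expectation given $\H^{i,1}_\infty \vee \tilde{\H}^i_{t-x} \vee \F_t$ and using Assumption \ref{ass: homo and indep} (5), which makes $(\tilde{\tau}^i_j, \tilde{X}^i_j)_{j \in \N_+}$ independent of $\F_\infty \vee \sigma(\tau^i_1, \theta^i, X^i_1)$, I strip off the $\H^{i,1}_\infty$ and $\F_\infty$ parts; the independent-increments hypothesis for $\tilde{S}$ w.r.t.\ $\tilde{\Hb}^i$ then yields $E[\tilde{S}_{t_{k+1}-x} - \tilde{S}_{t_k-x} \mid \tilde{\H}^i_{t-x}] = \tilde{m}(t_{k+1}-x) - \tilde{m}(t_k-x)$ for $t_k \geqslant t \geqslant x$.

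Finally, I pass to the limit exactly as in the proof of Proposition \ref{prop: Y with tau1 bigger than t}: the Riemann sums $\sum_k Z_{t_k}(\tilde{m}(t_{k+1}-x) - \tilde{m}(t_k-x))$ converge, by dominated convergence based on (\ref{eq: finite moment}) and the boundedness of $Z$, to the Lebesgue--Stieltjes integral $\int_t^T Z_u \ud \tilde{m}(u-x)$, and the conditional expectations converge accordingly; substituting $x = \tau^i_1$ gives the claim. The main obstacle I anticipate is the bookkeeping in the second step, namely making rigorous that conditioning on $\H^{i,>1}_t$ reduces to conditioning on $\tilde{\H}^i_{t-\tau^i_1}$ through the random shift by $\tau^i_1$, together with the careful double use of dominated convergence needed to exchange the conditional expectation with both the stepwise approximation of $Z$ and the passage $n\to\infty$; the independent-increments assumption is the key ingredient that converts the conditional expectation of a random increment of $\tilde{S}$ into the deterministic function $\tilde{m}$.
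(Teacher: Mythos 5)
Your proposal is correct and follows essentially the same route as the paper: reduce via Lemma \ref{lemma: Ind Y tau ij smaller than t} to conditioning on $\H^{i,1}_\infty \vee \H^{i,>1}_t \vee \F_t$, freeze $\tau^i_1 = x$ and identify $\H^{i,>1}_t$ with the shifted filtration $\tilde{\H}^i_{t-x}$, approximate $Z$ by stepwise $\Fb$-predictable processes, use the independence in Assumption \ref{ass: homo and indep} (5) together with independent increments of $\bigl(\sum_{j=1}^{\tilde{\mathbf{N}}_u}\tilde{X}^i_j\bigr)_u$ to replace the conditional increment by $\tilde m(t_{k+1}-x)-\tilde m(t_k-x)$, and pass to the Lebesgue--Stieltjes integral by dominated convergence as in Proposition \ref{prop: Y with tau1 bigger than t}. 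The bookkeeping issues you flag (the random time-shift identification of $\H^{i,>1}_t$ with $\tilde{\H}^i_{t-x}$, and the two layers of dominated convergence) are exactly the points the paper treats by reference to the filtration definitions and to the convergence argument already given for the first component, so your plan closes the gaps in the same way.
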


\begin{proof}
	 It follows from Lemma \ref{lemma: Ind Y tau ij smaller than t} that 
	 \begin{align*}
		 \left. E \left[ \Ind{\tau^i \leqslant t} Y  \right| \H^{i}_t \vee \F_t \right]=  \left. E \left[ \Ind{\tau^i_1 \leqslant t} Y  \right| \H^{i,1}_\infty \vee \H^{i, >1}_t \vee \F_t \right].
	\end{align*}
	As in the proof of Proposition \ref{prop: Y with tau1 bigger than t}, we assume first $Z$ of the form (\ref{eq: stepwise Z}). Similar calculations lead to
	 \begin{align*}
		 &\left. E \left[ \Ind{\tau^i_1 \leqslant t} Y  \right| \H^{i,1}_\infty \vee \H^{i, >1}_t \vee \F_t \right] \nonumber \\
		 =& \Ind{\tau^i_1 \leqslant t} \left. E \left[ \sum^{\infty}_{j = 1} \Ind{t <\tau^i_1 + \tilde{\tau}^i_{j} \leqslant T}\tilde{X}^i_j Z_{\tau^i_j} \right| \H^{i,1}_\infty \vee \H^{i, >1}_t \vee \F_t \right] \nonumber \\
		=& \Ind{\tau^i_1 \leqslant t} \left. \left. E \left[ \sum_{i = 0}^n \sum^{\infty}_{j = 1} \Ind{t_i < x + \tilde{\tau}^i_{j} \leqslant t_{i+1}}\tilde{X}^i_j Z_{t_i}  \right| \H^{i,1}_\infty \vee \H^{i, >1}_t \vee \F_t \right] \right|_{x = \tau^i_1} \nonumber \\  
		=& \Ind{\tau^i_1 \leqslant t} \left. \left. E \left[ \sum_{i = 0}^n \sum^{\infty}_{j = 1} \Ind{t_i < x + \tilde{\tau}^i_{j} \leqslant t_{i+1}}\tilde{X}^i_j Z_{t_i} \right| \H^{i,1}_\infty \vee \tilde{\H}^{i}_{t - x} \vee \F_t \right] \right|_{x = \tau^i_1},
	\end{align*}
	where the last step follows from the definitions of the filtrations.
	Using the tower property, the independence between the marked point process  $(\tilde{\tau}^i_j, \tilde{X}^i_{j})_{j \in \N_0}$ and $\F_\infty \vee \H^{i,1}_\infty$ (see Assumption \ref{ass: homo and indep}), and the independence of increments of the process the process $\left(\sum^{\tilde{\mathbf{N}}_t}_{j = 1} \tilde{X}^i_j\right)_{t \in[0,T]} $, we get furthermore
	\begin{align}
		&\left. E \left[ \Ind{\tau^i_1 \leqslant t} Y  \right| \H^{i,1}_\infty \vee \H^{i, >1}_t \vee \F_t \right] \nonumber \\
		=& \Ind{\tau^i_1 \leqslant t} \left. \left. E \left[ \sum_{i = 0}^n Z_{t_i} \left( \left. E \left[  \sum^{\infty}_{j = 1} \Ind{ \tilde{\tau}^i_{j} \leqslant t_{i+1} - x}\tilde{X}^i_j  \right| \H^{i,1}_\infty \vee \tilde{\H}^{i}_{t - x} \vee \F_{t_{i}} \right] \right. \right. \right. \right. \nonumber \\
		&- \left.\left. \left. \left. \left. E \left[  \sum^{\infty}_{j = 1} \Ind{ \tilde{\tau}^i_{j} \leqslant t_i -x }\tilde{X}^i_j  \right| \H^{i,1}_\infty \vee \tilde{\H}^{i}_{t - x} \vee \F_{t_i} \right] \right) \right| \H^{i,1}_\infty \vee \tilde{\H}^{i}_{t - x} \vee \F_t \right] \right|_{x = \tau^i_1} \nonumber \\		
		=& \Ind{\tau^i_1 \leqslant t} \left. \left. E \left[ \sum_{i = 0}^n  Z_{t_i} \left( \tilde{m}({t_{i+1} - x}) -  \sum^{\tilde{\mathbf{N}}_{t-x}}_{j = 1}\tilde{X}^i_j - \tilde{m}({t_{i} - x}) + \sum^{\tilde{\mathbf{N}}_{t-x}}_{j = 1}\tilde{X}^i_j  \right) \right| \H^{i,1}_\infty \vee \tilde{\H}^{i}_{t - x} \vee \F_t \right] \right|_{x = \tau^i_1}\nonumber \\
		=& \Ind{\tau^i_1 \leqslant t} \left. \left. E \left[ \sum_{i = 0}^n  Z_{t_i} \left( \tilde{m}({t_{i+1} - x}) - \tilde{m}({t_{i} - x})  \right) \right| \H^{i,1}_\infty \vee \F_t \right] \right|_{x = \tau^i_1}. \label{eq: riemann sum}
	\end{align}
	This yields that for any bounded, stepwise, $\Fb$-predictable process $Z$, we have
	\begin{align*}
		\left. E \left[ \Ind{\tau^i_1 \leqslant t} Y  \right| \H^{i}_t \vee \F_t \right] =\Ind{\tau^i_1 \leqslant t} \left. \left. E \left[ \int_t^T  Z_{u} \ud \tilde{m}({u - x})  \right| \H^{i,1}_\infty \vee \F_t \right] \right|_{x = \tau^i_1}.
	\end{align*}
	
	\noindent If $Z$ is continuous, bounded and $\Fb$-adapted, then $Z$ can be approximated by a sequence of bounded, stepwise and $\Fb$-predictable processes. This together with the fact that $\tilde m$ is right-continuous and monotone guarantees that the Riemann sum in (\ref{eq: riemann sum}) under the sign of conditional expectation converges to Lebesgue–-Stieltjes integral, using the same arguments of Proposition \ref{prop: Y with tau1 bigger than t}.
\end{proof}

We summarize the results in the following theorem, which gives an explicit representation of $\Gb$-conditional expectation with respect to the first reporting time $\tau^i_1$. Note that as most of the results, the conclusion also holds under alternative integrability and measurability conditions.

\begin{theorem}\label{theo: G representation 2}
	Let $Z := (Z_t)_{t \in [0, T]}$ be a continuous, bounded and $\Fb$-adapted process, $Y$ be of the form (\ref{eq: special Y}).
	If the process $\left(\sum^{\tilde{\mathbf{N}}_t}_{j = 1} \tilde{X}^i_j\right)_{t \in[0,T]} $, has independent increments and $\tilde m$ is defined in (\ref{eq: tilde m}), then
	\begin{align*}
		E\left. \left[ Y \right| \G_t \right] =& \Ind{\tau^i_1 \leqslant t} \left. \left. E \left[ \int_t^T  Z_{u} \ud \tilde{m}({u - x})  \right| \H^{i,1}_\infty \vee \tilde{\H}^{i}_{t - x} \vee \F_t \right] \right|_{x = \tau^i_1}\\
		& +  \Ind{\tau^i_1 > t} \frac{E\left. \left[ \int_t^T \left(E[X^i_1]  Z_u + \int_u^T Z_v  \ud \tilde{m}(v - u)\right) \ud P\left. \left( {\tau^i_1 \leqslant u} \right| \F_{u} \right) \right| \F_t \right]}{P\left. \left( {\tau^i_1 > t} \right| \F_{t} \right)},
	\end{align*}
	for $i=1,...,n$,
	where
	\[	
		P\left. \left( {\tau^i_1 \leqslant t} \right| \F_{t} \right) = \int_0^t \left( \alpha_0 e^{-\int^u_0 \mu_v \ud v} \mu_u + \int_0^u g(u - v) e^{-\int^v_0 \mu_s \ud s} \mu_v \ud v \right) \ud u,
	\]
	with $\alpha_0$ and $g$ defined in (\ref{eq: delay structure}).
\end{theorem}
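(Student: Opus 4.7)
The plan is to assemble this theorem directly from the preceding results, since the heavy analytic lifting has already been done in Propositions \ref{prop: Y with tau1 bigger than t} and \ref{prop: Y with tau 1 smaller than t}. The theorem is essentially the synthesis of these two representations, combined with the explicit formula from Corollary \ref{cor: differentiable}.

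First, I would reduce $\Gb$ to $\F_t \vee \H^i_t$. Since $Z$ is $\Fb$-adapted and continuous and each $\tau^i_j$, $X^i_j$ is $\H^i_T$-measurable, the random variable $Y = \sum_{j=1}^\infty \Ind{t < \tau^i_j \leqslant T} X^i_j Z_{\tau^i_j}$ is $(\F_T \vee \H^i_T)$-measurable (after truncating the sum to $\mathbf{N}^i_T < \infty$ on the relevant events). Boundedness of $Z$ together with integrability condition (\ref{eq: finite moment}) yields integrability of $Y$. Hence Corollary \ref{cor:reduction G to H} applies, giving
\[
E[Y \mid \G_t] = E[Y \mid \F_t \vee \H^i_t].
\]

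Second, I would invoke the decomposition (\ref{eq: 2 components}) with respect to $\tau^i_1$, writing
\[
E[Y \mid \F_t \vee \H^i_t] = E[\Ind{\tau^i_1 \leqslant t} Y \mid \F_t \vee \H^i_t] + E[\Ind{\tau^i_1 > t} Y \mid \F_t \vee \H^i_t],
\]
and apply Proposition \ref{prop: Y with tau 1 smaller than t} to the first term (this is where the independent-increments hypothesis on the development process $(\sum_{j=1}^{\tilde{\mathbf{N}}_t} \tilde X^i_j)_{t \in [0,T]}$ enters) and Proposition \ref{prop: Y with tau1 bigger than t} to the second term. The two representations combine verbatim into the displayed formula.

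Third, I would substitute the explicit expression for $P(\tau^i_1 \leqslant t \mid \F_t)$ from Corollary \ref{cor: differentiable} to obtain the stated form involving $\alpha_0$, $g$, and $\mu$. No further calculation is required, since the remaining integrals are already packaged in the conclusions of the two propositions.

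The only real subtlety worth checking is the measurability/integrability of $Y$ in the reduction step, but this is immediate from the homogeneity and independence assumptions together with (\ref{eq: finite moment}), which ensures the series defining $Y$ converges in $L^1$. There is no genuine obstacle here: the theorem's content lies in the two preceding propositions, and this statement simply gathers them into a single formula tailored to the reserve-estimation problem of Section \ref{sec: pricing non-life}.
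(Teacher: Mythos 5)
Your proposal follows exactly the paper's argument: the theorem is assembled by applying Corollary \ref{cor:reduction G to H} to pass from $\G_t$ to $\F_t \vee \H^i_t$, decomposing on $\{\tau^i_1 \leqslant t\}$ versus $\{\tau^i_1 > t\}$, invoking Propositions \ref{prop: Y with tau 1 smaller than t} and \ref{prop: Y with tau1 bigger than t} on the two pieces, and substituting the explicit form of $P(\tau^i_1 \leqslant t \mid \F_t)$ from Corollary \ref{cor: differentiable} (itself via Lemma \ref{lemma: cond exp tau1}). This matches the paper's proof, which simply lists precisely these ingredients.
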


\begin{proof}
	It is enough to combine Corollary \ref{cor:reduction G to H}, Lemma \ref{lemma: cond exp tau1}, Corollary \ref{cor: differentiable}, Proposition \ref{prop: Y with tau1 bigger than t} and Proposition \ref{prop: Y with tau 1 smaller than t}.
\end{proof}

Compared to Theorem \ref{theo: G representation 1}, Theorem \ref{theo: G representation 2} is more explicit and has the advantage that the representation is expressed as function of $\mu$, the distribution of $\theta^i$ and the distribution of $(\tilde{\tau}^i_j, \tilde{X}^i_j)_{j \in \N_0}$. This result will be useful for the concrete reserving problem in hybrid market in Section \ref{sec: hybrid market}.

\section{Comparison with the compensator approach}\label{sec: relation with compensator}

In this section, we compare our framework with the compensator approach for non-life insurance  in the existing literature. Within this section, the filtration $\Hb$ denotes the natural filtration of a marked point process $(\tau_n, X_n)_{n \in \N_0}$, with marked cumulative process $N$, and $\Gb$ is a generic enlargement of $\Hb$. We set $\H := \H_\infty$ and $\G := \G_\infty$.

In most of the current literature, e.g. \cite{Bar-Lau}, \cite{Nor-Sav}, \cite{Nor-qua} and \cite{Schm}, the study of non-life insurance contracts is based on modeling the $\Gb$-compensator of $N$, since the $\Gb$-compensator is involved in the pricing formula and in the calculation of the hedging strategy. In the reduced-form framework for life insurance, the direct modeling approach and the compensator approach coincide, see e.g. \cite{Bie-Rut}. However, the compensator approach presents several difficulties in a non-life insurance setting with nontrivial filtrations' dependence.

\begin{defn}
	The \emph{$\Gb$-mark-predictable $\sigma$-algebra} on the product space $\R_+ \times \mathcal{B}(\R_+) \times \Omega$ is the $\sigma$-algebra generated by sets of the form $(s,t] \times B \times A$ where $0 < s < t$, $B \in \mathcal{B}(\R_+)$ and $A \in \G_s$.
\end{defn}
\begin{defn}
	The $\Gb$-compensator of a marked point process $(\tau_n, X_n)_{n \in \N_0}$ is any $\Gb$-mark-predictable, cumulative process $\Lambda(t, B, \omega)$ such that, $(\Lambda(t,B))_{t \geqslant 0}$ with $\Lambda(t,B)(\cdot) := \Lambda(t, B, \cdot)$ is the $\Gb$-compensator of the point process $(N(t, B))_{t \geqslant 0}$. We use the notation $({\Lambda}_t)_{t \geqslant 0}$, ${\Lambda}_t := \Lambda(t, \R_+)$, to denote the $\Gb$-compensator of the ground process $(\mathbf{N}_t)_{t \geqslant 0}$.
\end{defn}

\noindent Theorem 14.2.IV(a) of \cite{Dal} shows that given a marked point process $(\tau_n, X_n)_{n \in \N_0}$ with finite first moment measure, its $\Gb$-compensator $\Lambda$ always exists and is $(l \otimes P)$-a.e. unique, where $l$ denotes the Lebesgue measure on $\R_+$. In particular, for all $(t, B, \omega) \in \R_+ \times \mathcal{B}(\R_+) \times \Omega$, the following relation holds
\begin{equation}\label{eq: relation lambda}
	\Lambda(t, B, \omega) = \int_0^t \kappa(B| s, \omega) {\Lambda}(\ud s, \omega),
\end{equation}
where $\kappa(B| s, \omega)$, $B \in \mathcal{B}(\R_+)$, $s \geqslant 0$, $\omega \in \Omega$, is the unique predictable kernel such that for all $A \in \G_s, 0 < s < t, B \in \mathcal{B}(\R_+)$,
\[
	\int_A \int_s^t N(u, B) (\omega) \ud u P(\ud \omega) = \int_A \int_s^t \kappa(B| u, \omega) {\mathbf{N}}_u( \omega) \ud u P(\ud \omega).
\] 

However, under general conditions it is not always true that given a $\Gb$-mark-predictable and cumulative process $\Lambda$, there exists a marked point process $(\tau_n, X_n)_{n \in \N_0}$ with $\Gb$-compensator $\Lambda$. The problem is first mentioned in \cite{Jacod}, where the case with $\Gb = \Hb$ is solved. An extension of the existence theorem to the case of $\Gb = \Fb \otimes \Hb$, i.e. when the filtrations $\Fb$ and $\Hb$ are independent, is provided in \cite{Det}.
Furthermore while the law of $N$ is uniquely determined by the $\Hb$-compensator, this is not true for the $\Gb$-compensator.
 See discussion in  \cite{Jacod} and Section 4.8 of \cite{Jac}. 
Consequently, the literature with the compensator approach is mostly limited to the cases of $\Gb \equiv \Hb$, see e.g. \cite{Nor-Sav}, \cite{Nor-qua}, or $\Gb = \Fb \otimes \Hb$, see e.g. \cite{Bar-Lau}.

In the following we provide a sufficient condition in the general case of $\Gb = \Fb \vee \Hb$, such that the law of $N$ is uniquely determined by $\Lambda$. Similarly to e.g. \cite{Nor-Sav} and \cite{Nor-qua}, we assume that the $\Gb$-compensator of $(\tau_n, X_n)_{n \in \N_0}$ has the following form
\begin{equation}\label{eq: compensator with intensity}
	\Lambda(t,B) = \int_0^t \int_B \lambda_s \eta_s (\ud x) \ud s \ \ \ \ \text{for all } t \geqslant 0, \ B \in \mathcal{B}(\R_+),
\end{equation}
where $\lambda := (\lambda_t)_{t \geqslant 0}$ is a $\Gb$-progressively measurable process and the mapping $\eta$
\begin{align*}
	\eta:  \R_+ \times \mathcal{B}(\R_+) \times \Omega  &\longrightarrow (\R_+, \mathcal{B}(\R_+))\\
	(t, B, \omega) &\mapsto \eta_t(B)(\omega),
\end{align*}
is such that for every $t \geqslant 0$, $\omega \in \Omega$, $\eta(t,\cdot,\omega)$ is a probability measure on $(\R_+, \mathcal{B}(\R_+))$, and for every $B \in \mathcal{B}(\R_+)$, $(\eta_t(B))_{t \geqslant 0}$ is a $\Gb$-progressively measurable process. 
Clearly, we have
\[
	{\Lambda}_t = \int_0^t \lambda_s \ud s \ \ \ \ \text{for all } t \geqslant 0.
\]
In particular, we can choose a predictable version of both $\lambda$ and $\eta$, see Section 14.3 of \cite{Dal} for details. The processes $\lambda$ and $\eta$ can be interpreted respectively as jump intensity and jump size intensity. We recall that a marked point process $(\tau_n, X_n)_{n \in \N_0}$ has \emph{independent marks} if the marks $(X_n)_{n \in \N_0}$ are mutually independent given $\mathbf{N}$. 

\begin{prop}\label{prop: intensity uniquely determines law}
	The law of a simple marked point process $(\tau_n, X_n)_{n \in \N_0}$ on $(\Omega, \H)$ with finite first moment measure, independent marks and of the form (\ref{eq: compensator with intensity}) is uniquely determined by $\lambda$ and $\eta$. If furthermore $\lambda$ is $\Hb$-measurable, then also the law of $N$ on $(\Omega, \G)$ is uniquely defined.
\end{prop}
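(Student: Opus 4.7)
The plan is to exploit the independence of marks to decouple the uniqueness problem into two parts: uniqueness of the law of the ground process $\mathbf{N}$, and uniqueness of the conditional law of the marks given $\mathbf{N}$. Once this decoupling is made, Jacod's classical uniqueness theorem (see \cite{Jacod}, cf. also Section 4.8 of \cite{Jac}) does the heavy lifting for the first piece, while the kernel $\eta$ directly encodes the second piece.

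First I would argue, using the independence-of-marks assumption, that the law of $(\tau_n, X_n)_{n \in \N_0}$ on $(\Omega, \H)$ is the product of the law of $\mathbf{N}$ on its natural filtration and the conditional mark kernel $(\kappa(\cdot \mid s, \omega))_{s \geqslant 0}$ appearing in (\ref{eq: relation lambda}). Comparing (\ref{eq: relation lambda}) with the given form (\ref{eq: compensator with intensity}), and using the $(l \otimes P)$-a.e. uniqueness of the kernel representation guaranteed by Theorem 14.2.IV(a) of \cite{Dal}, one identifies $\eta_s$ with $\kappa(\cdot \mid s, \cdot)$ at jump times. Hence $\eta$ uniquely pins down the conditional distribution of the marks given the ground process.

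It remains to show that the law of $\mathbf{N}$ on $(\Omega, \H)$ is determined by $\lambda$. Since $\int_0^{\cdot} \lambda_s \,\ud s$ is the $\Gb$-compensator of $\mathbf{N}$, its $\Hb$-predictable projection is the $\Hb$-compensator of $\mathbf{N}$. By Jacod's uniqueness theorem for point processes in their natural filtration, this $\Hb$-compensator uniquely determines the law of $\mathbf{N}$ on $(\Omega, \H)$. Combined with the previous paragraph this establishes the first statement of the proposition.

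The hard part will be making the projection argument watertight: the $\Hb$-predictable projection is a priori defined only modulo $P$, so in principle two measures compatible with the same data $(\lambda,\eta)$ could produce different projections. I would circumvent this by a direct pathwise/monotone-class construction, building $(\tau_n, X_n)_{n \in \N_0}$ inductively from $(\lambda, \eta)$ as in Section 5.1 of \cite{Bie-Rut}, and showing that any other MPP satisfying the hypotheses must have the same finite-dimensional distributions on $\H$. For the second part, if $\lambda$ is $\Hb$-measurable then $\lambda$ coincides with its own $\Hb$-predictable projection, so the $\Hb$-compensator and the $\Gb$-compensator of $N$ coincide. In this situation the enlargement $\Gb = \Fb \vee \Hb$ does not affect the compensator structure of $N$, and the uniqueness already proved on $\H$ transfers to $\G$; this is precisely the situation in \cite{Nor-Sav}, \cite{Nor-qua}, as well as the direct construction of Example 9.1.5 in \cite{Bie-Rut}.
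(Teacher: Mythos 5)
Your overall decomposition matches the paper's proof closely: you separate the ground-process law from the mark kernel via the independent-marks hypothesis, identify $\eta$ with the kernel $\kappa$ in (\ref{eq: relation lambda}) via the $(l\otimes P)$-a.e.\ uniqueness of Theorem 14.2.IV(a) of \cite{Dal}, and then reduce everything to a uniqueness statement about the ground process $\mathbf{N}$. The paper does exactly this, citing Proposition 6.4.IV(a) of \cite{Dal} for the decoupling and Corollary 4.8.5 of \cite{Jac} together with Theorem 14.2.IV(c) of \cite{Dal} for the ground-process step. Those references already encode the ``projection argument'' that you flag as potentially delicate — indeed, the paper works with the $\Hb$-optional projection $(E[\lambda_t\mid\H_t])_{t\geqslant 0}$ and lets the cited theorems absorb the measure-theoretic care — so the pathwise/monotone-class construction you propose as a workaround, while a legitimate alternative tactic, is not required by the paper's route.

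Where your proposal is genuinely weaker is the second statement, about uniqueness of the law of $N$ on $(\Omega,\G)$. Observing that the $\Hb$- and $\Gb$-compensators coincide once $\lambda$ is $\Hb$-adapted is correct but does not, on its own, deliver uniqueness on the enlarged filtration: the law on $(\Omega,\G)$ also encodes how $N$ is coupled with $\F_\infty$, and the ``uniqueness already proved on $\H$ transfers to $\G$'' step needs justification. The paper closes precisely this gap by invoking Theorem 4.8.1 of \cite{Jac}, which is the missing ingredient your argument would have to cite (or reprove) to be complete.
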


\begin{proof}
	By Proposition 6.4.IV(a) of \cite{Dal}, the law of marked point process with independent marks is uniquely determined by the kernel $\kappa$ and the distribution of ${N}$. 
	According to relations (\ref{eq: relation lambda}) and (\ref{eq: compensator with intensity}), the kernel $\kappa$ is given by 
	\[
		\kappa(B|t, \omega) = \eta_t(B)(\omega), \ \ \ \ (t, B, \omega) \in \R_+ \times \mathcal{B}(\R_+) \times \Omega.
	\]
	Corollary 4.8.5 of \cite{Jac} and Theorem 14.2.IV(c) of \cite{Dal} show that, if $N$ is simple and of the form (\ref{eq: compensator with intensity}), the process $(E[\lambda_t|\H_t])_{t \geqslant 0}$ determines uniquely the distribution of ${N}$ on $(\Omega, \H)$. If in addition $\lambda$ is $\Hb$-adapted, then by Theorem 4.8.1 of \cite{Jac}, also the distribution of ${N}$ on $(\Omega, \G)$ is uniquely determined.
\end{proof}

\noindent Nevertheless, Proposition \ref{prop: intensity uniquely determines law} requires the jump intensity process $\lambda$ to be $\Hb$-adapted in order to have $N$ uniquely defined in law, which is an unnatural condition in our context.

On the contrary, the approach proposed in Section \ref{sec: general framework} allows to take into account a dependence structure between the filtrations $\Hb$ and $\Gb$ by directly modeling the $\Fb$-adapted intensity process $\mu$. Furthermore, this allows to obtain analytical results for valuation formulas as shown in Section \ref{sec: preliminary results}.

\section{Pricing and hedging of non-life insurance liability cash flow in hybrid market}\label{sec: hybrid market}

In this section, we address the issue of pricing and hedging non-life insurance liability cash flows by applying the results of Section \ref{sec: preliminary results} under the benchmarked risk-minimization approach. We assume that $P$ is the real world measure and consider a general structure for a hybrid insurance and financial market. We fix a time horizon $T$ with $0 < T < \infty$, and denote the inflation index process by $I:=(I_t)_{t \in [0,T]}$, which represents the percentage increments of the Consumer Price Index (CPI) and follows a nonnegative $(P, \Fb)$-semimartingale. 
We distinguish real price value, i.e. inflation adjusted, from nominal price value, which can be converted in real value at any time $t \in [0,T]$, if divided by the inflation index $I_t$. If not otherwise specified, all price values are expressed in nominal value. 

We consider $d$ liquidly traded primary assets on the financial market described by price process vector $S := (S^1_t, ..., S^d_t)_{t \in [0, T]}$, which follows a real-valued $(P, \Fb)$-semimartingale. 
We assume that there is a publicly accessible index, based on the intensity process $\mu$ and modelled by the process $L:=(L_t)_{t \in [0,T]}$ with
\[
	L_t := e^{-\Gamma_t}, \ \ \ \ t \in [0,T],
\]
see e.g. \cite{Cai-Bla}. This index reflects the underlying systematic risk-factor related to the insurance portfolio, such as mortality risk, weather risk, car accident risk, etc.
We distinguish three kinds of primary assets as elements of the vector $S$:
\begin{enumerate}	
	\item financial assets, such as the zero-coupon bond, call and put options, futures etc.;
	\item inflation linked derivatives, such as inflation linked zero-coupon bond (called also zero-coupon Treasury Inflation Protected Security, TIPS), which pays off $I_T$ (equivalent to 1 real unit) at time $T$, inflation linked call and put options, etc.;
	\item macro risk factor linked derivatives based on the index $L$, such as longevity bond which pays off $L_T$ at time $T$, weather index-based derivatives, etc. Note that apart from longevity bonds, other macro index based derivatives are still not common in the market.
\end{enumerate}
We denote by $L(S,P,\Gb)$ the space of $\R^d$-valued $\Gb$-predictable $S$-integrable processes. 
By following the definitions in \cite{Bia-Pra}, we call portfolio or value process $S^\delta:=(S^\delta_t)_{t \in [0,T]}$ associated to a trading strategy $\delta:= (\delta_t)_{t \in [0,T]}$ in $L(S,P,\Gb)$ the following càdlàg optional process
	\[
		S^\delta_{t-} = \delta_t^\top S_t = \sum_{i=1}^d \delta^i_t S^i_t, \ \ \ \ t \in [0,T].
	\]
It is called self-financing if 
	\[
		S^\delta_t = S_0^\delta + \int_0^t \delta_{u-}^\top \ud S_u = S_0^\delta + \sum_{i=1}^d \int_0^t \delta^i_{u-} \ud S^i_u, \ \ \ \ t \in [0,T].
	\]
We introduce the following set
\[
	\mathcal{V}^{+}_x = \{ S^\delta \text{ self-financing }: \ \delta \in L(S, \P, \Gb), \ S_0^\delta = x > 0, \ S^\delta > 0\}.
\]

\begin{defn}\label{def: benchmark}
	A \emph{benchmark} or \emph{numéraire portfolio} $S^{*}:=(S^{*}_t)_{t \in [0,T]}$ is an element of $\mathcal{V}^{+}_1$, such that 
	\[
		\frac{S^\delta_s}{S^{*}_s} \geqslant E \left[ \left. \frac{S^\delta_t}{S^{*}_t} \right| \G_s \right], \ \ \ \ s,t \in [0,T], \ \ t \geqslant s.
	\]
\end{defn}

We follow the approach of \cite{Pla3} and work under the following assumption.

\begin{assump}\label{ass: benchmark}
	There exists a benchmark portfolio $S^*$.
\end{assump}

In \noindent \cite{Hul-Sch}, it is shown that Assumption \ref{ass: benchmark} is weaker than assuming the existence of an equivalent martingale measure. As discussed in \cite{Bia}, this weak no-arbitrage assumption is more suitable for modeling a hybrid market as in our case and allows to work directly under the real world measure $P$. An extensive background of the benchmark approach and its application can be found in \cite{Pla3}. Note that, as discussed in \cite{Pla3}, the benchmark portfolio $S^*$ can be identified with a sufficiently diversified portfolio such as Morgan Stanley capital weighted world stock accumulation index, i.e. MSCI world index.

\noindent Given a generic random variable or process $X$, we denote by $\hat X:= X/S^*$ the \emph{benchmarked} value of $X$.
The following lemma is proved in \cite{Bia-Cre}.

\begin{lemma}\label{lemma: benchmarked primary assets}
	If the vector process of primary assets $S$ is continuous, then the benchmarked vector process $\hat S := {S}/{S^*}$ is a $(P, \Gb)$-local martingale.
\end{lemma}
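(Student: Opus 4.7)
The plan is to combine the supermartingale property imposed by Definition \ref{def: benchmark} with the continuity assumption and a localization argument, upgrading the supermartingale property of each benchmarked primary asset to the local martingale property.

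First I would check continuity of $\hat{S}$. Since $S^*$ is a self-financing portfolio with $S^*\in\mathcal{V}^{+}_1$, its dynamics are $S^{*}_{t}=1+\int_{0}^{t}\delta_{u-}^{\top}\,\ud S_{u}$ for some $\delta\in L(S,P,\Gb)$, and the continuity of the primary-asset process $S$ makes this stochastic integral continuous; being also strictly positive, $S^{*}$ is a continuous, strictly positive $\Gb$-semimartingale. Hence $\hat{S}=S/S^{*}$ is a continuous $\Gb$-semimartingale.

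Second, I would derive the supermartingale property of each $\hat{S}^{i}$. For any $\varepsilon>0$, the strategy holding one unit of asset $S^{i}$ and $\varepsilon$ units of the benchmark portfolio gives a self-financing portfolio $S^{i}+\varepsilon S^{*}\in\mathcal{V}^{+}_{S^{i}_{0}+\varepsilon}$ (strict positivity is preserved by $S^{*}>0$ once $S^{i}\geqslant 0$, otherwise one localizes at $\tau_{\varepsilon}:=\inf\{t\colon S^{i}_{t}+\varepsilon S^{*}_{t}\leqslant 0\}$). Definition \ref{def: benchmark} then gives that $\hat{S}^{i}+\varepsilon$ is a $(P,\Gb)$-supermartingale; letting $\varepsilon\downarrow 0$, so is $\hat{S}^{i}$.

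Third, and this is the decisive step, I would upgrade each $\hat{S}^{i}$ to a local martingale via a symmetric localization. By continuity and Doob--Meyer, write $\hat{S}^{i}=\hat{S}^{i}_{0}+M^{i}-A^{i}$ with $M^{i}$ a continuous $\Gb$-local martingale and $A^{i}$ a continuous, predictable, nondecreasing process starting at $0$. Define $\sigma_{n}:=\inf\{t\in[0,T]\colon \hat{S}^{i}_{t}\geqslant n\}$; the continuity of $\hat{S}^{i}$ and the supermartingale convergence theorem give $\sigma_{n}\uparrow\infty$ a.s. On $[0,\sigma_{n}]$ the self-financing portfolio $nS^{*}-S^{i}$ is strictly positive and lies in $\mathcal{V}^{+}_{n-S^{i}_{0}}$, so its benchmarked value $n-\hat{S}^{i}$ is a $(P,\Gb)$-supermartingale up to $\sigma_{n}$. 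Equivalently $\hat{S}^{i}_{\cdot\wedge\sigma_{n}}$ is a $\Gb$-submartingale, which combined with the supermartingale property already obtained forces $\hat{S}^{i}_{\cdot\wedge\sigma_{n}}$ to be a true $\Gb$-martingale and hence $A^{i}\equiv 0$ on $[0,\sigma_{n}]$. Since $\sigma_{n}\uparrow\infty$, the sequence $(\sigma_{n})$ reduces $\hat{S}^{i}$ to a true martingale, and the full vector $\hat{S}$ is a $(P,\Gb)$-local martingale.

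The main obstacle is the admissibility side of the shorting argument: one needs the localizing sequence $(\sigma_{n})$ to actually exhaust $[0,T]$ so that $nS^{*}-S^{i}$ can be chosen in $\mathcal{V}^{+}_{x}$ on a sequence of $\Gb$-stopping intervals. This is precisely where the continuity hypothesis on $S$ is indispensable; in the presence of jumps, $\hat{S}^{i}$ could overshoot the level $n$ and destroy positivity, so a different, more delicate construction would be needed.
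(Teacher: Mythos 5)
The paper itself gives no proof of this lemma — it simply cites \cite{Bia-Cre}. Your proposal reconstructs what is essentially the standard benchmark-approach argument: benchmarked nonnegative self-financing portfolios are supermartingales by Definition~\ref{def: benchmark}, and for a continuous asset one can go short locally to flip the inequality and force the drift to vanish. This is the right idea and almost certainly the same argument as in the cited reference, so on the level of strategy you are on track.

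There is, however, a genuine technical gap in your third step as written. With $\sigma_{n}:=\inf\{t:\hat{S}^{i}_{t}\geqslant n\}$ and $\hat S^i$ continuous, on $\{\sigma_{n}<\infty\}$ you have $\hat{S}^{i}_{\sigma_{n}}=n$, hence $nS^{*}_{\sigma_{n}}-S^{i}_{\sigma_{n}}=0$. The stopped short portfolio therefore \emph{hits zero} at $\sigma_{n}$ and so is not an element of $\mathcal{V}^{+}_{n-S^{i}_{0}}$, which requires $S^{\delta}>0$ globally; as stated, Definition~\ref{def: benchmark} does not apply to it, and the submartingale conclusion does not follow. The standard fix is to leave yourself a buffer: consider the self-financing portfolio that holds $(n+1)\delta^{*}-e_{i}$ on $[0,\sigma_{n}]$ and then rolls the strictly positive remaining wealth $(n+1-\hat{S}^{i}_{\sigma_{n}})S^{*}_{\sigma_{n}}\geqslant S^{*}_{\sigma_{n}}>0$ entirely into the benchmark after $\sigma_{n}$. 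This portfolio \emph{is} in $\mathcal{V}^{+}_{n+1-S^{i}_{0}}$, its benchmarked value is $(n+1)-\hat{S}^{i}_{\cdot\wedge\sigma_{n}}$, and applying Definition~\ref{def: benchmark} now legitimately gives that $\hat{S}^{i}_{\cdot\wedge\sigma_{n}}$ is a submartingale; combined with the supermartingale property from step two this yields the true-martingale conclusion on $[0,\sigma_n]$, and the rest of your argument is fine. The same positivity issue afflicts the parenthetical localization at $\tau_{\varepsilon}$ in your second step; for that step it is cleaner simply to invoke nonnegativity of the primary assets so that $S^{i}+\varepsilon S^{*}\in\mathcal{V}^{+}_{S^{i}_{0}+\varepsilon}$ without localization.
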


\noindent For the sake of simplicity, we assume the following conditions similar to the ones in \cite{Bia-Zha}.
\begin{assump}\label{ass: concrete model}
	The inflation index process $I = (I_t)_{t \in [0,T]}$ and the vector process of primary assets $S$ are continuous. The benchmark portfolio $S^* = (S^*_t)_{t \in [0,T]}$ is continuous, $\Fb$-adapted, and the benchmarked value process $\hat{S} := S/S^*$ is an $(\Fb, P)$-local martingale.  Inflation linked zero-coupon bond (or TIPS) is a primary asset, i.e. an element of the vector $S$.
\end{assump}

The payment stream in real unit of the insurance company towards policyholders is modelled by a nonnegative $(P, \Gb)$-semimartingale $D := ( D_t)_{t \in [0, T]}$. We denote by $A:=(A_t)_{t \in [0,T]}$ the nominal benchmarked cumulative payment, namely
\begin{equation}\label{eq: A process definition}
	A_t := \int^t_0 \frac{I_u}{S^*_u} \ \ud D_u, \ \ \ \ t \in [0,T].
\end{equation}

\begin{defn}
	We call \emph{real world pricing formula} associated to $A$ the following formula
	\begin{equation}\label{eq: pricing formula dividend}
		V_t := \frac{S^*_t}{I_t} \condespg{A_T - A_t} = \frac{S^*_t}{I_t} \condespg{\int_{]t, T]} \frac{I_u}{S^*_u} \ \ud D_u},
	\end{equation}
	for $t \in [0, T]$.
\end{defn}

\noindent Here $V_t$ in (\ref{eq: pricing formula dividend}) is expressed in real value, i.e. inflation adjusted value. According to the benchmark approach of \cite{Pla3}, a portfolio's process is fair, if its benchmarked value process is a $P$-martingale. The real world pricing formula (\ref{eq: pricing formula dividend}) then provides the fair portfolio of minimal price among all replicating self-financing portfolios for a given benchmarked claim $\hat H$, if $\hat H$ is hedgeable. 
In the case of incomplete market models, it corresponds to the \emph{benchmarked risk-minimizing price} for the payment process $A$ at time $t$, if $A$ is square integrable, i.e.
\[
	\sup_{t \in [0,T]} \esp{A_t^2} < \infty.
\]
For hedging purpose, we now use the relation between benchmark approach and risk minimization, illustrated in \cite{Pla3} and \cite{Bia-Cre} for a single payoff and in Appendix of \cite{Bia-Zha} for the case of dividend payments. In particular, Theorem A.7 of \cite{Bia-Zha} shows that if 
\begin{equation}\label{eq: GKW decomposition}
	A_T = \mathbb{E}[A_T] + \int_0^T \left( \delta^A_u \right)^\top \ud \hat S_u + L^A_T, \ \ \ \ P-\text{a.s.},
\end{equation}
is the Galtchouk-Kunita-Watanabe decomposition of $A_T$, where $\int_0^{\cdot} \delta^A_u \ud \hat S_u$ is $P$-strongly orthogonal to $L^A$, then $\delta^A$ is the unique benchmarked risk-minimizing strategy for $A$, i.e. the trading strategy which minimizes the expected quadratic risk as in Definition A.4 of \cite{Bia-Zha}. Furthermore, the associated benchmarked cumulative cost process $C_t^{\bar \delta}$ is given by 
\[
	C_t^{\bar \delta} = \esp{A_T} + L^A_t, \ \ \ \ t \in [0,T],
\]
and the benchmarked value process $\hat S_t^{\bar \delta}$ is given by the discounted value of the real-world pricing formula (\ref{eq: pricing formula dividend})
\[
	\hat S_t^{\bar \delta} = \condespg{A_T - A_t} =  \frac{I_t}{S^*_t} V_t, \ \ \ \ t \in [0,T].
\]
Note that decomposition (\ref{eq: GKW decomposition}) shows the orthogonality between the perfectly hedgeable part $\int_0^{T} \left( \delta^A_u \right)^\top \ud \hat S_u$ of $A_T$ and the totally unhedgeable part $\left( \esp{A_T} + L^A_T \right)$, covered by the benchmarked cumulative cost process $C$.

\subsection{Pricing and hedging non-life insurance claims}\label{sec: pricing non-life}
In the setting outlined above, we now apply the results of Section \ref{sec: preliminary results} to compute the real world pricing formula for non-life insurance claims, under the interpretation of Section \ref{sec: non life}. The cumulative payment at time $t$ related to $i$-th policy expressed in real value is given by
\[
	\sum_{j = 1}^{\infty} \Ind{\tau^i_j \leqslant t} X^i_j = \sum_{j = 1}^{\mathbf{N}^i_t} X^i_j.
\]
The nominal benchmarked cumulative payment process $A := (A_t)_{t \in [0,T]}$ is hence
\begin{equation}\label{eq: reserve}
	{A}_t := \int_0^t \frac{I_s}{S^*_s} \ud D_s = \sum^n_{i=1} \sum_{j = 1}^{\mathbf{N}^i_t} \frac{I_{\tau^i_j}}{S^*_{\tau^i_j}} X^i_j, \ \ \ \ t \in [0,T].
\end{equation}
Following \cite{Ary}, \emph{claim reserving problem} in the context of non-life insurance can be formulated as the estimation of $A$. In particular, we are not only interested in estimating the overall reserve $E[A_T]$, but also the conditional reserve $\condespg{A_T - A_t}$ at a given time $t$, i.e. a predictor of the remaining nominal payment $A_T - A_t$, given the information up to time $t$. Unlike the life insurance case, the risk related to non-life insurance policies is not only related to the accident itself, but also to the first reporting delay (this is the case of incurred but not reported claims, called IBNR claims), to the time and the size of developments after the first reporting.
We now focus on pricing and hedging the remaining nominal payment ${A}_T - A_t$, for $t \in [0,T]$.
We assume that the process $I/S^*$ is $\Fb$-conditionally independent of $\tau^i_1$, for all $i=1,...,n$, and that the cumulative payments related to marked point processes $(\tilde{\tau}^i, \tilde{X}^i_j)_{j \in \N_0}$, $i = 1,...,n$, 
\[
	\sum_{j = 1}^{\tilde{\mathbf{N}}^i_t} \tilde{X}^i_j, \ \ \ \ t \in [0,T], \ \ \ \ i = 1,...,n,
\]
are i.i.d. compound Poisson processes, i.e. $\tilde{\mathbf{N}}^i$ are  mutually independent Poisson processes with parameter $\lambda$, and $\tilde{X}^i_j$ are i.i.d. integrable nonnegative random variables independent of $\tilde{\mathbf{N}}^i$ with expectation $E[\tilde{X}^i_j] = m$. In this case, we have 
\[
	\tilde{m}(t)  = \lambda  m t,  \ \ \ \ t \in [0,T],
\]
where $\tilde m$ is defined in (\ref{eq: tilde m}).

In view of the above assumptions, all conditions in Theorem \ref{theo: G representation 2} are satisfied in the case of $Y = A_T - A_t$, for $t \in [0,T]$.
Let $R_t$ be the number of reported claims at time $t$, i.e. 
\[
	R_t := \sum_1^n \Ind{\tau^i_1 \leqslant t}, \ \ \ \ t \in [0,T].
\]
The real world pricing formula (\ref{eq: pricing formula dividend}) together with Corollary \ref{cor:reduction G to H}, Theorem \ref{theo: G representation 2} and  Assumption \ref{ass: concrete model} yield
\begin{align}
	V_t  \frac{I_t}{S^*_t} =&  \condespg{A_T - A_t} =   \condespg{\sum^n_{i=1} \sum_{j = \mathbf{N}^i_t}^{\mathbf{N}^i_T} \frac{I_{\tau^i_j}}{S^*_{\tau^i_j}} X^i_j} \nonumber\\
	=& \sum^n_{i=1}  E \left. \left[\sum_{j = \mathbf{N}^i_t}^{\mathbf{N}^i_T} \frac{I_{\tau^i_j}}{S^*_{\tau^i_j}} X^i_j\right|  \F_t \vee \H^i_t \right] \nonumber\\
	=& \lambda  m R_t \left. E \left[ \int_t^T   \frac{I_u}{S^*_u} \ud u  \right| {\H}^{i,1}_\infty \vee \F_t \right]  \nonumber\\
		& + (n - R_t) \frac{ E\left. \left[ \int_t^T \left(E[X^i_1]  \frac{I_u}{S^*_u} + \lambda m \int_u^T \frac{I_v}{S^*_v}  \ud v \right) \ud P\left. \left( {\tau^i_1 \leqslant u} \right| \F_{u} \right) \right| \F_t \right]}{e^{-\int^t_0 \mu_u \ud u} +  \int_0^t \bar{G}(t - u) e^{-\int^u_0 \mu_v \ud v} \mu_u \ud u} \nonumber\\
	=& \lambda  m R_t \int_t^T   \left. E \left[  \frac{I_u}{S^*_u}  \right| \F_t \right]  \ud u \nonumber\\
		& + (n - R_t) \frac{ E\left. \left[ \int_t^T \left(E[X^i_1]  \frac{I_u}{S^*_u} + \lambda m \int_u^T \frac{I_v}{S^*_v}  \ud v \right) \ud P\left. \left( {\tau^i_1 \leqslant u} \right| \F_{u} \right) \right| \F_t \right]}{e^{-\int^t_0 \mu_u \ud u} +  \int_0^t \bar{G}(t - u) e^{-\int^u_0 \mu_v \ud v} \mu_u \ud u} \nonumber\\
	=& \lambda  m R_t (T-t) \frac{I_t}{S^*_t}   \nonumber\\
		& + (n - R_t) \frac{ E\left. \left[ \int_t^T \left(E[X^i_1]  \frac{I_u}{S^*_u} + \lambda m \int_u^T \frac{I_v}{S^*_v}  \ud v \right) \ud P\left. \left( {\tau^i_1 \leqslant u} \right| \F_{u} \right) \right| \F_t \right]}{e^{-\int^t_0 \mu_u \ud u} +  \int_0^t \bar{G}(t - u) e^{-\int^u_0 \mu_v \ud v} \mu_u \ud u}, \label{eq: non life two components}
\end{align}
	where the conditional probability function $P\left. \left( {\tau^i_1 \leqslant t} \right| \F_{t} \right)$ is given in (\ref{eq: diff P tau 1}), i.e.
	\[
		P\left. \left( \tau^i_1 \leqslant t \right| \F_t \right) = \int_0^t \left( \alpha_0 e^{-\int^s_0 \mu_v \ud v} \mu_s +  \int_0^s g(s - u) e^{-\int^u_0 \mu_v \ud v} \mu_u \ud u \right)\ud s.
	\]

\noindent The first component on the left-hand side of (\ref{eq: non life two components}) 
\begin{equation}\label{eq: non life reported}
	\lambda  m R_t (T-t) \frac{I_t}{S^*_t} 
\end{equation}
corresponds to already reported claims. We observe that the valuation of this part does not involve any more the updating information after the first reporting.
The second component on the right-hand side of (\ref{eq: non life two components}) 
\begin{equation}\label{eq: non life not reported}
	(n - R_t)\frac{ E\left. \left[ \int_t^T \left(E[X^i_1]  \frac{I_u}{S^*_u} + \lambda m \int_u^T \frac{I_v}{S^*_v}  \ud v \right) \ud P\left. \left( {\tau^i_1 \leqslant u} \right| \F_{u} \right) \right| \F_t \right]}{e^{-\int^t_0 \mu_u \ud u} +  \int_0^t \bar{G}(t - u) e^{-\int^u_0 \mu_v \ud v} \mu_u \ud u},
\end{equation}
which can be further explicitly computed, corresponds to not reported claims and includes both cases of incurred but not reported (IBNR) claims as well as not yet incurred claims. The standard literature of non-life insurance is mainly focused on IBNR claims. However, for the pricing problem it is more appropriate to consider the entire expression (\ref{eq: non life two components}). 
As already mentioned at the beginning of this section, this price equals the benchmarked risk-minimizing price, if we assume square integrability of the claim. 

We now calculate the associated benchmarked risk-minimizing strategy. For additional details, we refer to \cite{Zhang}.
We mainly focus on the second component (\ref{eq: non life not reported}) related to not yet reported claims, since hedging strategy is additive with respect to claims, and the first component (\ref{eq: non life reported}) related to already reported claims is perfectly hedgeable by trading inflation linked zero-coupon bonds.
Using the same arguments of Proposition 4.11 in \cite{Bar} and Section 4.1 of \cite{Bia-Zha}, the benchmarked risk minimizing strategy $\bar{\delta}$ associated to not yet reported claims is given by
\[
	\bar{\delta}_t = (n - R_{t_{-}}) \left(e^{-\int^t_0 \mu_u \ud u} +  \int_0^t \bar{G}(t - u) e^{-\int^u_0 \mu_v \ud v} \mu_u \ud u \right)^{-1}\phi_t, \ \ \ \ t \in [0,T],
\] 
where $\phi_t$ is the benchmarked risk-minimizing strategy at $t$ related to 
\begin{equation}\label{eq: non life fin claim}
	U_t := E\left. \left[ \int_0^T \left(E[X^i_1]  \frac{I_u}{S^*_u} + \lambda m \int_u^T \frac{I_v}{S^*_v}  \ud v \right) \ud P\left. \left( {\tau^i_1 \leqslant u} \right| \F_{u} \right) \right| \F_t \right], \ \ \ \ t \in [0,T].
\end{equation}
More precisely, the vector process $\phi := (\phi_t)_{t\in[0,T]}$ follows from the Galtchouk--Kunita--Watanabe decomposition of $(U_t)_{t \in [0,T]}$,
\begin{align*}
	U_t =& E\left[ \int_0^T \left(E[X^i_1]  \frac{I_u}{S^*_u} + \lambda m \int_u^T \frac{I_v}{S^*_v}  \ud v \right) \ud P\left. \left( {\tau^i_1 \leqslant u} \right| \F_{u} \right) \right] + \int_0^t \phi^{\top}_u \ud \hat{S}_u + L^U_t.
\end{align*}
Note that benchmarked risk-minimizing strategy covers only the perfectly hedgeable part of the liability cash flow $A$. It is however the best possible hedging strategy in the benchmarked risk-minimizing sense. Indeed, the totally unhedgeable part, which is orthogonal to the hedgeable part, represents the basis hedging risk.

The form of $V$ in (\ref{eq: non life two components}) suggests the design of derivatives which can be used to hedge risks in this market model.
In particular, since $U$ involves only the stochastic processes  $S^*$, $I$ and $\mu$, it is sufficient to introduce three kinds of instruments, namely pure financial assets, inflation linked derivatives and macro risk-factor linked derivatives, to hedge risks derived from $S^*$, $I$ and $\mu$ respectively.

\section{Conclusion}

In this paper, we introduce a general framework for modeling an insurance liability cash flow in continuous time by extending the reduced-form setting. This framework allows to consider a nontrivial dependence between the reference information flow and the internal insurance information flow. 
In this setting, we compute explicit valuation and hedging formulas, which can be used for pricing non-life insurance products under the benchmark approach.

\section*{Acknowledgements}

The authors would like to thank Irene Schreiber for interesting discussions about property and casualty insurance and anonymous referees for helpful remarks.

\bibliographystyle{plain}
\bibliography{bibliography}

\end{document}